\def\ps@pprintTitle{%
   \let\@oddhead\@empty
   \let\@evenhead\@empty
   \def\@oddfoot{\reset@font\hfil\thepage\hfil}
   \let\@evenfoot\@oddfoot
}
\newtheorem{theorem}{Theorem}
\newtheorem{example}[theorem]{Example}
\newcommand{\tr}{{\mathrm{Tr}}}
\newcommand{\gf}{{\mathrm{GF}}}
\newcommand{\PG}{{\mathrm{PG}}}
\newcommand{\support}{{\mathrm{suppt}}}
\newcommand{\cA}{{\mathcal{A}}}
\newcommand{\C}{{\mathsf{C}}}
\newcommand{\E}{{\mathsf{E}}}
\newcommand{\cH}{{\mathcal{H}}}
\newcommand{\bc}{{\mathbf{c}}}
\newcommand{\bg}{{\mathbf{g}}}
\newcommand{\bzero}{{\mathbf{0}}}
\newcommand{\PGL}{{\mathrm{PGL}}}
\begin{document}

\begin{frontmatter}



\title{Near MDS codes from oval polynomials}

\tnotetext[fn1]{
Q. Wang's research was supported by the National Natural Science Foundation of China under grant number 61602342,
the Natural Science Foundation of Tianjin  under grant number 18JCQNJC70300, and the Science and Technology Development Fund of Tianjin Education Commission for Higher Education under grant number 2018KJ215.
Z. Heng's research was supported by the Natural Science Foundation of China under grant number 11901049,  the Natural Science Basic Research Program of Shaanxi (Program No. 2020JQ-343), and the Fundamental Research Funds for the Central Universities, CHD, under grant number 300102129301.}

\author[qywang]{Qiuyan Wang}
\ead{wangyan198801@163.com}
\author[zlheng]{Ziling Heng}
\ead{zilingheng@163.com}

\address[qywang]{School of Computer Science and Technology, Tiangong University,
 Tianjin 300387, China}
\address[zlheng]{School of Science, Chang'an University, Xi'an 710064, China}




\begin{abstract}
A linear code with parameters of the form $[n, k, n-k+1]$  is referred to as an MDS (maximum distance separable) code.
A linear code with parameters of the form $[n, k, n-k]$ is said to be almost MDS (i.e., almost maximum distance separable) or AMDS for short.
A code is said to be near  maximum distance separable (in short, near MDS or NMDS) if both the code and its dual
are almost maximum distance separable.  Near MDS codes correspond to interesting objects in finite geometry and have nice
applications in combinatorics and cryptography. In this paper, seven infinite families of
 $[2^m+1, 3, 2^m-2]$ near MDS codes over $\gf(2^m)$ and seven infinite families of  $[2^m+2, 3, 2^m-1]$ near MDS
 codes over $\gf(2^m)$ are constructed  with special oval polynomials for odd $m$. In addition, nine infinite families of
 optimal $[2^m+3, 3, 2^m]$ near MDS
 codes over $\gf(2^m)$ are constructed  with oval polynomials in general.
\end{abstract}

\begin{keyword}
Linear code \sep near MDS code \sep o-polynomial \sep subfield code.

\MSC  94B15 \sep 51E22 \sep 08A40

\end{keyword}

\end{frontmatter}


\section{Introduction}

Before introducing the motivations and objectives of this paper, we need to recall arcs in the projective plane
$\PG(2, 2^m)$ and oval polynomials over $\gf(2^m)$, and define near MDS codes.

\subsection{Almost MDS codes and near MDS codes}

A linear code with parameters of the form $[n, k, n-k+1]$ is called an MDS (maximum distance separable) code.
A linear code with parameters of the form  $[n, k, n-k]$ is said to be almost maximum distance separable (almost MDS or AMDS for short).
A code is said to be near  maximum distance separable (near MDS or NMDS for short) if both the code and its dual
are almost maximum distance separable.
By definition,
an $[n, k]$ linear code $\C$ over $\gf(q)$ is NMDS if and only if $d(\C) + d(\C^\perp)=n$, where $d(\C)$ and
$d(\C^\perp)$ denote the minimum distance of $\C$ and $\C^\perp$, respectively.
NMDS codes and $n$-tracks in finite geometry are closely related. The reader is referred to \cite{DeBoer96,DodLan95,DodLan00}
for further information of $n$-tracks in finite geometry and their connections with NMDS codes.

The existence of NMDS codes is of course a concern. It is known that
algebraic geometric $[n, k, n-k]$ NMDS codes over $\gf(q)$ for $q=p^m$ do exist for every $n$ with
\begin{eqnarray*}
n \leq
\left\{
\begin{array}{ll}
q + \lceil 2 \sqrt{q} \rceil & \mbox{ if $p$ divides $\lceil 2 \sqrt{q} \rceil$ and $m$ is odd,} \\
q + \lceil 2 \sqrt{q} \rceil +1 & \mbox{ otherwise,}
\end{array}
\right.
\end{eqnarray*}
and arbitrary $k \in \{2,3, \ldots, n-2\}$ \cite{TVlad}.

The first near MDS code was the $[11, 6, 5]$ ternary Golay code discovered in 1949 by Golay \cite{Golay49},
which has applications in group theory and combinatorics.
Some recent progress
in near MDS codes were made in \cite{DingTang19,KJ19,TangDing20,TD13}.

\subsection{Hyperovals, oval polynomials and $[q+2, 3, q]$ MDS codes over $\gf(q)$}\label{sec-3objects}

From now on let $q=2^m$ for a positive integer $m$. It is known that the automorphism group of the Desarguesian
projective plane $\PG(2, q)$ is the projective general linear group $\PGL_3(q)$.
An \emph{arc} in the Desarguesian projective plane $\PG(2, q)$
is a set of at least three points in $\PG(2, q)$ such that no three of them are collinear, i.e., no three of them are on the same line.
For any arc $\cA$ of the Desarguesian projective plane $\PG(2, q)$, it is known that $|\cA| \leq q+2$.
A \emph{hyperoval} $\cH$ in $\PG(2,q)$ is a set of $q+2$ points such that no three of
them are collinear, i.e., an arc in $\PG(2,q)$ with $q+2$ points. It is known that any line in
 $\PG(2,q)$ intersects with a hyperoval in $\PG(2,q)$ in either zero or two points.
Hyperovals are maximal arcs, as they have the maximal number of points as arcs.
Two hyperovals
are said to be \emph{equivalent} if there is
an automorphism of $\PG(2, q)$ that sends one to the other.

The theorem below shows that all hyperovals in $\PG(2,q)$ can be constructed with
a special type of permutation polynomials of the finite field $\gf(q)$ \cite[p. 504]{LN97}. It was
discovered by Segre.

\begin{theorem}\label{thm-hyperovaloply}
Let $m \geq 2$. Any hyperoval in the Desarguesian projective plane $\PG(2, q)$ can be written in the
following form
$$
\cH(f)=\{(f(c), c, 1): c \in \gf(q)\} \cup \{(1,0,0)\}  \cup \{(0,1,0)\},
$$
where $f \in \gf(q)[x]$ is a polynomial such that
\begin{enumerate}
\item $f$ is a permutation polynomial of $\gf(q)$ with $\deg(f)<q$ and $f(0)=0$, $f(1)=1$;  and
\item for each $a \in \gf(q)$, $g_a(x):=(f(x+a)+f(a))x^{q-2}$ is also a permutation polynomial
      of $\gf(q)$.
\end{enumerate}
Conversely, every such set $\cH(f)$ is a hyperoval.
\end{theorem}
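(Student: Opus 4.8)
The plan is to prove the two implications separately: in one direction I read the polynomial conditions off the secant structure of the hyperoval, and in the other I check the no-three-collinear condition by a short case analysis.

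\emph{Hyperoval $\Rightarrow$ o-polynomial.} I will use two standard facts: every line of $\PG(2,q)$ meets a hyperoval $\cH$ in $0$ or $2$ points, and $\PGL_3(q)$ acts transitively on ordered projective frames (ordered $4$-tuples of points, no three collinear). Since $\cH$ is an arc, any four of its points form a frame, and elements of $\PGL_3(q)$ carry hyperovals to equivalent hyperovals; so after such a transformation I may assume $(0,0,1),(1,1,1),(1,0,0),(0,1,0)\in\cH$, and it suffices to put this normalized $\cH$ in the stated form. The line $Z=0$ already contains the two points $(1,0,0),(0,1,0)$ of $\cH$, hence no further point of $\cH$, so the remaining $q$ points of $\cH$ are affine. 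Looking at the $q$ lines $Y=cZ$ ($c\in\gf(q)$) through $(1,0,0)$: each meets $\cH$ in $(1,0,0)$ and exactly one further affine point, whose second coordinate is $c$; thus there is a unique point $(f(c),c,1)\in\cH$ for each $c$, and $f\colon\gf(q)\to\gf(q)$ is a well-defined map, which I represent by its reduced polynomial of degree $<q$. Repeating with the $q$ lines $X=\lambda Z$ through $(0,1,0)$ shows each $\lambda$ equals $f(c)$ for a unique $c$, so $f$ is a permutation; and $(0,0,1),(1,1,1)\in\cH$ force $f(0)=0$, $f(1)=1$. This is condition (1). For condition (2), fix $a$ and consider the $q-1$ lines through $(f(a),a,1)$ other than $Y=aZ$ and $X=f(a)Z$; each is $X+f(a)Z+\beta(Y+aZ)=0$ for a unique $\beta\in\gf(q)^*$, and meets $\cH$ in $(f(a),a,1)$ and one further affine point $(f(c),c,1)$ with $c\neq a$ and $\beta=(f(c)+f(a))/(c+a)$. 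Distinct lines give distinct $c$ and distinct $\beta$ (three collinear points of $\cH$ being impossible), so $c\mapsto(f(c)+f(a))/(c+a)$ is a bijection from $\gf(q)\setminus\{a\}$ onto $\gf(q)^*$; substituting $x=c+a$ and using $x^{q-2}=x^{-1}$ on $\gf(q)^*$ together with $g_a(0)=0$, this says precisely that $g_a(x)=(f(x+a)+f(a))x^{q-2}$ is a permutation polynomial of $\gf(q)$.

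\emph{o-polynomial $\Rightarrow$ hyperoval.} Here I would verify directly that $\cH(f)$ has $q+2$ points — the $q$ affine points $(f(c),c,1)$ are distinct because the $c$'s are, and none is $(1,0,0)$ or $(0,1,0)$ — and then that no three points of $\cH(f)$ are collinear, by a case split on how many of the three lie at infinity. Two at infinity is vacuous, since the line through $(1,0,0)$ and $(0,1,0)$ is $Z=0$, which contains no affine point. With exactly one point at infinity, the line through $(1,0,0)$ and $(f(c_1),c_1,1)$ is $Y=c_1Z$ and the one through $(0,1,0)$ and $(f(c_1),c_1,1)$ is $X=f(c_1)Z$; in either case a second affine point on the line forces $c_2=c_1$ by the injectivity of $f$ from (1). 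The remaining case, three distinct affine points $(f(c_i),c_i,1)$, reduces to the vanishing of the $3\times3$ determinant with rows $(f(c_i),c_i,1)$; subtracting the first row and expanding (char $2$), this is equivalent to $(f(c_2)+f(c_1))/(c_2+c_1)=(f(c_3)+f(c_1))/(c_3+c_1)$, i.e.\ to $g_{c_1}(c_2+c_1)=g_{c_1}(c_3+c_1)$, which is excluded by the injectivity of $g_{c_1}$ from (2). Hence $\cH(f)$ is an arc with $q+2$ points, i.e.\ a hyperoval.

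I expect the main obstacle to be the bookkeeping in the forward direction: correctly invoking transitivity of $\PGL_3(q)$ on frames so as to normalize four points of $\cH$ simultaneously, and then matching the geometric data — the $q-1$ secants through a fixed point of $\cH$ and their ``slopes'' $\beta$ — with the exact algebraic shape $(f(x+a)+f(a))x^{q-2}$, the key point being that $x\mapsto x^{q-2}$ implements inversion on $\gf(q)^*$ and sends $0$ to $0$. The converse is, once the case split is organized, essentially the determinant identity above.
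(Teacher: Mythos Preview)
Your proof is correct and follows the classical argument. However, the paper does not give its own proof of this theorem: it is stated as a known result attributed to Segre, with a reference to \cite[p.~504]{LN97}, and is used only as background for the constructions in later sections. So there is no in-paper proof to compare against; your write-up is essentially the standard proof one finds in the cited source, with the normalization via transitivity of $\PGL_3(q)$ on frames, the secant count through each fixed point to read off the permutation conditions, and the determinant/slope computation for the converse.
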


Any polynomial satisfying the two conditions of Theorem \ref{thm-hyperovaloply} is
called an \emph{oval polynomial} (in short, an o-polynomial). For example,
$f(x)=x^2$ is an oval polynomial over $\gf(q)$ for all $m \geq 2$.
Two oval polynomials are said to be \emph{equivalent} if their hyperovals are equivalent.

Hyperovals in $\PG(2, q)$ and MDS codes over $\gf(q)$ with parameters $[q+2, 3, q]$
are equivalent objects in the sense that they can be constructed from each other. Below
we introduce their equivalence.

Given a hyperoval $\cH=\{h_1, h_2, \ldots, h_{q+2}\}$ in $\PG(2, q)$, one
constructs a linear code $\C_{\cH}$ of length $q+2$ over $\gf(q)$ with generator matrix
$[h_1, h_2, \ldots, h_{q+2}]$, where each $h_i$ is a column vector of the vector space $\gf(q)^3$.
It was shown in \cite[Section 12.2]{Dingbook18} that $\C_{\cH}$ is an MDS code over $\gf(q)$ with parameters $[q+2, 3, q]$
and weight enumerator
$$
1 + \frac{(q+2)(q^2-1)}{2} z^q + \frac{q(q-1)^2}{2} z^{q+2}.
$$
The dual of $\C_{\cH}$ is clearly an MDS code over $\gf(q)$ with parameters $[q+2, q-1, 4]$.

Conversely, given an MDS code $\C$ over $\gf(q)$ with parameters $[q+2, 3, q]$, one constructs
a hyperoval in $\PG(2, q)$ as follows. Let
$[h_1, h_2, \ldots, h_{q+2}]$ be a generator matrix of $\C$. Let $a_i \in \gf(q)^*$
such that $\bar{h}_i=a_ih_i$ is a point of $\PG(2, q)$. Then
$$
\cH=\{\bar{h}_1, \bar{h}_2, \ldots, \bar{h}_{q+2}\}
$$
is a hyperoval in the Desarguesian projective plane $\PG(2, q)$.

\subsection{Motivations and objectives of this paper}

The discussion in Section \ref{sec-3objects} showed that the following are equivalent objects:
\begin{itemize}
\item Oval polynomials over $\gf(q)$.
\item Hyperovals in $\PG(2, q)$.
\item $[q+2, 3, q]$ MDS codes over $\gf(q)$.
\end{itemize}
Hence, every oval polynomial over $\gf(q)$ gives a $[q+2, 3, q]$ MDS codes over $\gf(q)$. A natural question is whether oval polynomials over $\gf(q)$ can be used to construct near MDS codes. This paper is mainly motivated by this
question.

MDS codes are widely used in communication and data storage systems. However, the support designs of MDS codes
are complete and thus trivial \cite{Dingbook18}. Near MDS codes are not optimal with respect to the Singleton bound,
but may give nice $t$-designs \cite{DingTang19,TangDing20}. Hence, near MDS codes could be more interesting than
MDS codes in the theory of combinatorial designs. In fact, two 70-year breakthroughs were recently made by near MDS
codes in \cite{DingTang19,TangDing20}.  This is the first motivation of this paper.
The second and third motivations of studying near MDS codes are their applications in the design of block ciphers \cite{LW17}
and secret sharing \cite{Zhou09}.

In this paper, we construct seven infinite families of $[2^m+1, 3, 2^m-2]$ near MDS codes over $\gf(2^m)$ and seven infinite families of  $[2^m+2, 3, 2^m-1]$ near MDS codes over $\gf(2^m)$  for odd $m$ with special oval polynomials. We also present nine  infinite families of
 $[2^m+3, 3, 2^m]$ near MDS codes over $\gf(2^m)$, which are distance-optimal. We will determine the parameters
of the binary subfield codes of some of these near NMDS codes.

\section{Preliminaries}

\subsection{Some properties of  NMDS codes}

In this subsection, we introduce two basic results about NMDS codes that will be needed in this paper later.
We have the following weight distribution formulas for NMDS codes.

\begin{theorem}[\cite{DodLan95}]\label{thm-DLwtd}
Let $\C$ be an $[n, k, n-k]$ near MDS code over the finite field $\gf(q)$. Then the weight enumerators of the two codes
$\C^\perp$ and $\C$
are given by
\begin{eqnarray*}
A_{k+s}^\perp = \binom{n}{k+s} \sum_{j=0}^{s-1} (-1)^j \binom{k+s}{j}(q^{s-j}-1) +
             (-1)^s \binom{n-k}{s}A_{k}^\perp
\end{eqnarray*}
for $s \in \{1,2, \ldots, n-k\}$; and
\begin{eqnarray*}
A_{n-k+s} = \binom{n}{k-s} \sum_{j=0}^{s-1} (-1)^j \binom{n-k+s}{j}(q^{s-j}-1) +
             (-1)^s \binom{k}{s}A_{n-k}
\end{eqnarray*}
for $s \in \{1,2, \ldots, k\}$.
\end{theorem}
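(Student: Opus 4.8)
The plan is to establish the second identity (the formula for $A_{n-k+s}$) directly by a hyperplane count, and then obtain the first one from it by duality: since $\C$ is NMDS, so is $\C^\perp$, and $\C^\perp$ has parameters $[n,n-k,k]$, so applying the second identity to $\C^\perp$ — replacing $k$ by $n-k$ and using $\binom{n}{(n-k)-s}=\binom{n}{k+s}$ — produces exactly the claimed formula for $A_{k+s}^\perp$ in terms of $A_k^\perp$. Thus it suffices to treat $A_{n-k+s}$.

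For this I would fix a generator matrix $G=[\bg_1,\dots,\bg_n]$ of $\C$ with columns $\bg_i\in\gf(q)^k$. For $\bx\in\gf(q)^k$ the codeword $\bx G$ has weight $n-|\{i:\langle\bx,\bg_i\rangle=0\}|$, and as $\bx$ runs over the nonzero vectors each hyperplane of $\gf(q)^k$ arises as $\bx^\perp$ exactly $q-1$ times, while $\bx$ and its nonzero multiples give codewords of the same weight. Hence, letting $N_j$ denote the number of hyperplanes of $\gf(q)^k$ containing exactly $j$ of the points $\bg_1,\dots,\bg_n$, one gets $A_{n-k+s}=(q-1)N_{k-s}$ for all $s$ with $0\le s\le k$; in particular $A_{n-k}=(q-1)N_k$.

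Next I would use the two distance conditions. Because $d(\C)=n-k$, no hyperplane contains $k+1$ or more of the $\bg_i$ (such a hyperplane would yield a nonzero codeword of weight at most $n-k-1$), so $N_j=0$ for $j>k$. Because $d(\C^\perp)=k$, every $t$ columns of $G$ with $t\le k-1$ are linearly independent; such a $t$-set spans a $t$-dimensional subspace, which lies in exactly $(q^{k-t}-1)/(q-1)$ hyperplanes, so double counting incident pairs (hyperplane, $t$-subset of columns contained in it) gives
\[
\sum_{j=0}^{k}\binom{j}{t}N_j=\binom{n}{t}\,\frac{q^{k-t}-1}{q-1},\qquad 0\le t\le k-1,
\]
together with the trivial relation $N_k=\sum_j\binom{j}{k}N_j$ coming from $t=k$. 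Regarding $N_k=A_{n-k}/(q-1)$ as the one free parameter, this is an upper-triangular linear system for $(N_0,\dots,N_k)$.

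Finally I would solve it by the binomial inversion $N_j=\sum_{t\ge j}(-1)^{t-j}\binom{t}{j}M_t$, where $M_t=\binom{n}{t}(q^{k-t}-1)/(q-1)$ for $t\le k-1$ and $M_k=A_{n-k}/(q-1)$; putting $j=k-s$, multiplying by $q-1$, and re-indexing the sum by $t=k-s+j$ with $\binom{t}{k-s}\binom{n}{t}=\binom{n}{k-s}\binom{n-k+s}{j}$ and $q^{k-t}=q^{s-j}$, the relation $A_{n-k+s}=(q-1)N_{k-s}$ becomes
\[
A_{n-k+s}=\binom{n}{k-s}\sum_{j=0}^{s-1}(-1)^j\binom{n-k+s}{j}(q^{s-j}-1)+(-1)^s\binom{k}{s}A_{n-k},
\]
as desired. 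The only genuinely delicate point is the honest use of \emph{both} distance conditions — $d(\C)=n-k$ to kill $N_j$ for $j>k$, and $d(\C^\perp)=k$ (which in particular forbids repeated or zero columns) to justify the incidence identities in the full range $0\le t\le k-1$; the binomial inversion and the index shift are then routine bookkeeping. An equivalent route runs the same linear system through the MacWilliams transform, but the hyperplane count makes the single degree of freedom $A_{n-k}$ most transparent.
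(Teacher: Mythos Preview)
The paper does not give its own proof of this theorem; it is quoted from Dodunekov and Landjev \cite{DodLan95} and used as a black box (together with Theorem~\ref{thm-121FW}) to pin down the weight enumerators in Theorems~\ref{thm-J271}, \ref{thm-nmdscodej231} and \ref{thm-nmdscodej232}. So there is nothing in the paper to compare your argument against.

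That said, your proof is correct and is essentially the standard derivation. The two key observations are used exactly where they must be: $d(\C)=n-k$ forces $N_j=0$ for $j>k$, while $d(\C^\perp)=k$ guarantees that every $t\le k-1$ columns of $G$ are linearly independent (in particular nonzero and distinct), so each $t$-subset lies in precisely $(q^{k-t}-1)/(q-1)$ hyperplanes and the double count
\[
\sum_{j=0}^{k}\binom{j}{t}N_j=\binom{n}{t}\,\frac{q^{k-t}-1}{q-1},\qquad 0\le t\le k-1,
\]
is valid in the full range. The binomial inversion and the identity $\binom{n}{t}\binom{t}{k-s}=\binom{n}{k-s}\binom{n-k+s}{j}$ with $t=k-s+j$ then give the formula for $A_{n-k+s}$, and your duality step---applying the same identity to the $[n,n-k,k]$ NMDS code $\C^\perp$ and using $\binom{n}{n-k-s}=\binom{n}{k+s}$---yields the companion formula for $A_{k+s}^\perp$. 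This is in the same spirit as the original proof in \cite{DodLan95}, which also hinges on the single free parameter $A_{n-k}$ (equivalently $A_k^\perp$) once both distance conditions are imposed; the alternative route you mention via the MacWilliams transform leads to the same linear system.
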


It was pointed out in \cite{DingTang19}
 that two $[n, k, n-k]$ NMDS codes over $\gf(q)$ could have different weight distributions.
This means that
the weight distribution of an $[n, k, n-k]$ NMDS code over $\gf(q)$ depends on not only  $n$,
$k$ and $q$, but also some other parameters of the code \cite{DingTang19}.
However,  the weight distribution of any $[n, k, n-k+1]$ MDS code over $\gf(q)$
depends only on $n, k$ and $q$.
This is a big difference between MDS
codes and NMDS codes.
The following theorem describes a nice property of NMDS codes and will be needed in the sequel when we
settle the weight distributions of some families of near MDS codes.

\begin{theorem}[\cite{FaldumWillems97}]\label{thm-121FW}
Let $\C$ be an NMDS code. Then for every minimum weight codeword $\bc$ in $\C$, there exists,
up to a multiple, a unique minimum weight codeword $\bc^\perp$ in $\C^\perp$ such that
$\support(\bc) \cap \support(\bc^\perp)=\emptyset$, where $\support(\bc)=\{1 \leq i \leq n: c_i \neq 0\}$
denotes the support of the codeword $\bc=(c_1, \ldots, c_n)$. In particular, $\C$ and $\C^\perp$
have the same number of minimum weight codewords.
\end{theorem}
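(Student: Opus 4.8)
The final statement to prove is Theorem~\ref{thm-121FW}, which asserts that in a near MDS code $\C$, every minimum-weight codeword $\bc$ has, up to scalar multiple, a unique minimum-weight codeword $\bc^\perp$ in $\C^\perp$ with disjoint support, and consequently $\C$ and $\C^\perp$ have equally many minimum-weight codewords.

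The plan is as follows. Write $n$ for the length and $k$ for the dimension, so $d(\C)=n-k$ and $d(\C^\perp)=k$; these are forced by the NMDS hypothesis together with the Singleton bound and the well-known AMDS constraints (an AMDS code has $d \le n-k$, and $d(\C)+d(\C^\perp)=n$ is exactly the NMDS condition). First I would fix a minimum-weight codeword $\bc \in \C$ and let $S=\support(\bc)$, so $|S|=n-k$, and let $T=\{1,\dots,n\}\setminus S$, so $|T|=k$. The key linear-algebra fact to extract is a \emph{genericity of punctured/shortened codes}: because $\C$ has a codeword of weight exactly $n-k$ supported on $S$ and $d(\C)=n-k$, no nonzero codeword is supported on a set of size $<n-k$; dually, one shows that the columns of a generator matrix indexed by $T$ are linearly independent. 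Indeed if the $k$ columns indexed by $T$ were dependent, there would be a nonzero vector in $\C^\perp$ supported on $T$ of weight $\le k$; combined with $\bc$ this would let one build a codeword of $\C$ of weight $<n-k$ (or contradict $d(\C^\perp)=k$ more directly) — so the submatrix on $T$ is invertible. Symmetrically, the $n-k$ columns indexed by $S$ of a parity-check matrix are linearly independent.

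Next I would use these independence facts to produce the partner codeword. Since the generator-matrix columns on $T$ are linearly independent and $|T|=k$, the shortened code $\C_S$ (codewords of $\C$ that vanish off $S$) is spanned by $\bc$ alone — it is $1$-dimensional. Passing to duals, the shortened code of $\C^\perp$ on $T$, equivalently the condition ``a codeword of $\C^\perp$ vanishing on $S$,'' has dimension $k-(n - (n-k)) \;=\;$ by a counting/duality argument exactly one dimension; concretely, $(\C^\perp)_T := \{\bx \in \C^\perp : \support(\bx)\subseteq T\}$ is $1$-dimensional. Any nonzero $\bc^\perp$ in it has support contained in $T$, hence disjoint from $S=\support(\bc)$, and has weight $\ge d(\C^\perp)=k=|T|$, forcing $\support(\bc^\perp)=T$ and $\wt(\bc^\perp)=k$, i.e. it is a minimum-weight codeword of $\C^\perp$. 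Uniqueness up to scalar is immediate from the $1$-dimensionality. Conversely, any minimum-weight $\bc^\perp$ of $\C^\perp$ disjoint from $\bc$ must, having weight $k$ and support inside the $k$-set $T$, have support exactly $T$, so it lies in this $1$-dimensional space — giving the ``unique up to multiple'' claim in both directions.

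Finally, for the counting statement I would set up the natural pairing: the assignment $\bc \mapsto \bc^\perp$ (well defined up to scalars, so really a map on projective points, or equivalently a bijection between the sets of $1$-dimensional subspaces generated by minimum-weight words) is injective because $\bc$ is recovered from $\bc^\perp$ by the same construction applied in $\C^{\perp\perp}=\C$ — the role of $S$ and $T$ simply swaps. Running the argument with $\C^\perp$ in place of $\C$ gives the reverse map, and the two are mutually inverse on the level of one-dimensional subspaces; multiplying through by the $q-1$ scalars then shows $A_{n-k}(\C)=A_{k}(\C^\perp)$, the number of minimum-weight codewords. The main obstacle, and the step deserving the most care, is the dimension count showing $(\C^\perp)_T$ is \emph{exactly} one-dimensional (not just nonzero): this needs the interplay $\dim \C_S + \dim(\C^\perp)_{\bar S}$-type identities together with $d(\C)=n-k$ used to kill any ``extra'' codewords supported on subsets of $S$; everything else is routine once that is nailed down.
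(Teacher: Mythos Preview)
The paper does not give its own proof of this result; it is quoted from \cite{FaldumWillems97} and used as a black box. So there is no ``paper's proof'' to compare against, and I will simply assess your argument on its merits.

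Your overall architecture is the right one, and the endgame (the bijection on one-dimensional subspaces of minimum-weight codewords, via swapping $S\leftrightarrow T$) is fine. But the central linear-algebra step is stated incorrectly and is, as written, self-contradictory. You assert that the $k$ columns of a generator matrix $G$ of $\C$ indexed by $T$ are linearly independent; yet a linear dependence among those columns is \emph{exactly} a nonzero vector of $\C^\perp$ supported on $T$, i.e.\ a nonzero element of $(\C^\perp)_T$. So ``columns on $T$ independent'' is equivalent to $(\C^\perp)_T=\{0\}$, which is the opposite of what you need. (The symmetric claim about the parity-check matrix on $S$ is likewise false: $\bc$ itself witnesses a dependence among those columns.)

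What actually makes the argument work is the identity you allude to at the end but never carry out. From $d(\C)=n-k$ one gets $\dim\C_S=1$ exactly as you say (any two codewords with support equal to $S$ have a nontrivial combination of smaller weight, hence zero). Then
\[
\dim(\C^\perp)_T \;=\; |T|-\rank(G_T) \;=\; k-\bigl(k-\dim\C_S\bigr)\;=\;1,
\]
since the kernel of the projection $\C\to\gf(q)^T$ is $\C_S$. This is the step that replaces your mistaken ``independence'' paragraph; once it is in place, the rest of your outline (any nonzero element of $(\C^\perp)_T$ has weight $\ge k=|T|$, hence support exactly $T$; uniqueness up to scalar; the pairing is inverted by the same construction applied to $\C^\perp$) goes through as you wrote it.
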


The theorem above shows that there is a natural
correspondence between the minimum weight codewords of an NMDS code $\C$ and those of its dual
$\C^\perp$.

\subsection{Oval polynomials and their properties}

To construct near MDS codes over $\gf(q)$ in the sequel, we need specific oval polynomials over $\gf(q)$ and have to introduce
some of their properties. The following is a list of known infinite families of oval polynomials in the literature.

\begin{theorem}\label{thm-knownopolys}
Let $m \geq 2$ be an integer. The following are oval polynomials of $\gf(q)$, where $q=2^m$.
\begin{itemize}
\item The translation polynomial $f(x)=x^{2^h}$, where $\gcd(h, m)=1$.
\item The Segre polynomial $f(x)=x^6$, where $m$ is odd.
\item The Glynn oval polynomial $f(x)=x^{3 \times 2^{(m+1)/2} +4}$, where $m$ is odd.
\item The Glynn oval polynomial $f(x)=x^{ 2^{(m+1)/2} + 2^{(m+1)/4} }$ for $m \equiv 3 \pmod{4}$.
\item The Glynn oval polynomial $f(x)=x^{ 2^{(m+1)/2} + 2^{(3m+1)/4} }$ for $m \equiv 1 \pmod{4}$.
\item The Cherowitzo oval polynomial $f(x)=x^{2^e}+x^{2^e+2}+x^{3 \times 2^e+4},$ where $e=(m+1)/2$ and $m$ is odd.
\item The Payne oval polynomial $f(x)=x^{\frac{2^{m-1}+2}{3}} + x^{2^{m-1}} + x^{\frac{3 \times 2^{m-1}-2}{3}}$,
        where $m$ is odd.
\item The Subiaco polynomial
$$
f_a(x)=((a^2(x^4+x)+a^2(1+a+a^2)(x^3+x^2)) (x^4 + a^2 x^2+1)^{2^m-2}+x^{2^{m-1}},
$$
where $\tr_{q/2}(1/a)=1$ and $a \not\in \gf(4)$ if $m \equiv 2 \bmod{4}$.
\item The Adelaide oval polynomial
$$
f(x)=\frac{T(\beta^m)(x+1)}{T(\beta)} + \frac{T((\beta x + \beta^q)^m)}{T(\beta) (x+T(\beta)x^{2^{m-1}} +1)^{m-1}} + x^{2^{m-1}},
$$
where $m \geq 4$ is even, $\beta \in \gf(q^2) \setminus \{1\}$ with $\beta^{q+1}=1$, $m \equiv \pm (q-1)/3 \pmod{q+1}$,
and $T(x)=x+x^q$.
\end{itemize}
\end{theorem}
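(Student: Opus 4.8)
The plan is, for each entry of the list, to verify the two conditions defining an o-polynomial in Theorem~\ref{thm-hyperovaloply}: (i)~$f$ is a permutation polynomial of $\gf(q)$ with $\deg(f)<q$, $f(0)=0$ and $f(1)=1$; and (ii)~for every $a\in\gf(q)$, $g_a(x)=(f(x+a)+f(a))x^{q-2}$ is a permutation polynomial of $\gf(q)$. Each of these families was established in earlier work --- Segre for $x^6$, Glynn for the three Glynn exponents, Cherowitzo, Payne, and Cherowitzo--Penttila--Pinneri--Royle together with Cherowitzo--O'Keefe--Penttila for the Subiaco and Adelaide polynomials --- so at the level of the present paper the statement is obtained simply by collecting those references. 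Below I indicate how the two most elementary cases can be checked directly, and where the real difficulty of a self-contained treatment would lie.

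For the translation polynomial $f(x)=x^{2^h}$ with $\gcd(h,m)=1$, the map $x\mapsto x^{2^h}$ is the $h$-fold Frobenius, hence a bijection of $\gf(q)$; also $f(0)=0$, $f(1)=1$ and $\deg(f)=2^h<q$, so (i)~holds. For (ii), in characteristic two $(x+a)^{2^h}=x^{2^h}+a^{2^h}$, so $f(x+a)+f(a)=x^{2^h}$, and hence as a function on $\gf(q)$ the polynomial $g_a(x)=x^{2^h}x^{q-2}$ agrees with $x\mapsto x^{2^h-1}$ (both vanish at $0$). This is a permutation because $\gcd(2^h-1,q-1)=2^{\gcd(h,m)}-1=1$. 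In particular this covers the example $f(x)=x^2$ mentioned after Theorem~\ref{thm-hyperovaloply}.

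For the Segre polynomial $f(x)=x^6$ with $m$ odd, we have $\gcd(6,2^m-1)=1$ since $2^m-1$ is odd and $2^m\equiv 2\pmod 3$ for odd $m$; thus $f$ permutes $\gf(q)$ and (i)~is clear. For (ii), the substitution $x\mapsto ax$ reduces the claim to $a=1$ (the case $a=0$ is immediate, $g_0$ representing the permutation $x\mapsto x^5$): one computes $g_a(ay)=a^{5}\big((y+1)^6+1\big)y^{q-2}$ with $a^{5}\neq 0$, so $g_a$ is a permutation iff $g_1$ is. Since $(x+1)^6+1=x^2(x^2+x+1)^2$ in characteristic two, $g_1$ represents the function $x\mapsto x(x^2+x+1)^2=x^5+x^3+x$, and the remaining assertion that $x^5+x^3+x$ permutes $\gf(2^m)$ for odd $m$ is exactly Segre's classical computation, which I would cite. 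The main obstacle, were one to insist on a fully self-contained proof of the whole list, is precisely the permutation verifications for the Glynn, Cherowitzo, Payne, Subiaco and Adelaide polynomials: for the first three these reduce to delicate but elementary arguments over $\gf(q)$, while the Subiaco and Adelaide cases genuinely require the finite-geometry machinery --- and, in part, the computer checks --- of the original papers, so invoking those references is the appropriate level of detail here.
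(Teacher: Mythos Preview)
The paper does not prove this theorem at all: it is presented as a catalogue of results already in the literature, introduced with the sentence ``The following is a list of known infinite families of oval polynomials in the literature,'' and no argument or even specific citation is attached to the individual items. So there is no ``paper's own proof'' to compare against.

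Your proposal is therefore not merely consistent with the paper's treatment, it is strictly more informative: you correctly identify the statement as a compilation and name the original sources (Segre, Glynn, Cherowitzo, Payne, Cherowitzo--Penttila--Pinneri--Royle, Cherowitzo--O'Keefe--Penttila), and you add self-contained verifications of the two most accessible cases. Those verifications are accurate. For the translation polynomial the key identity $\gcd(2^h-1,2^m-1)=2^{\gcd(h,m)}-1=1$ is exactly what is needed; one small caveat is that $\deg(f)=2^h<q$ presupposes $0<h<m$, which is the standard normalisation but is not stated explicitly in the theorem. For the Segre case your reduction to $a=1$ via $x\mapsto ax$ and the factorisation $(x+1)^6+1=x^2(x^2+x+1)^2$ are correct, and deferring the final permutation claim for $x^5+x^3+x$ to Segre is appropriate.

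In short: your approach is the right one for a survey statement of this kind, and nothing in the paper conflicts with it.
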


The following property of oval polynomials will be needed later.

\begin{theorem}[\cite{Masch98}]\label{thm-opoly2to1}
A polynomial $f$ over $\gf(q)$ with $f(0)=0$ is an oval polynomial if and only if $f_u:=f(x)+ux$
is $2$-to-$1$ for every $u \in \gf(q)^*$.
\end{theorem}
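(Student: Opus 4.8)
The plan is to route everything through the geometry. By Theorem~\ref{thm-hyperovaloply} (and because the normalizations $\deg f<q$ and $f(1)=1$ there may be imposed by reducing $f$ modulo $x^{q}-x$ and rescaling, operations that do not affect the $2$-to-$1$ hypothesis for any $u\in\gf(q)^{*}$), a polynomial $f$ with $f(0)=0$ is an oval polynomial precisely when $\cH(f)=\{(f(c),c,1):c\in\gf(q)\}\cup\{(1,0,0),(0,1,0)\}$ is a hyperoval of $\PG(2,q)$. So it suffices to prove that $\cH(f)$ is a hyperoval if and only if $f_{u}:=f(x)+ux$ is $2$-to-$1$ for every $u\in\gf(q)^{*}$.

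For the direction ``hyperoval $\Rightarrow$ $2$-to-$1$'', I would use that a hyperoval, being a $(q+2)$-arc, meets every line of $\PG(2,q)$ in $0$ or $2$ points. Every line not through $(1,0,0)$ has the form $x_{0}=ux_{1}+wx_{2}$ with $u,w\in\gf(q)$, and its intersection with $\cH(f)$ consists of the points $(f(c),c,1)$ with $f(c)=uc+w$, that is $f_{u}(c)=w$ (characteristic $2$), together with $(0,1,0)$ when $u=0$. Hence for each $u\neq0$ and each $w$ the fibre $f_{u}^{-1}(w)$ has size $0$ or $2$; since these fibres partition a set of size $q$, exactly $q/2$ of them have size $2$ and $f_{u}$ is $2$-to-$1$.

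For the converse the crux is to deduce that $f$ is a permutation. I would partition the $\binom{q}{2}$ two-element subsets $\{x,y\}$ of $\gf(q)$ according to the unique $u=(f(x)+f(y))/(x+y)\in\gf(q)$ with $f_{u}(x)=f_{u}(y)$. For each $u\neq0$, since $f_{u}$ is $2$-to-$1$, such a pair $\{x,y\}$ is exactly a size-$2$ fibre of $f_{u}$, and there are $q/2$ of these; summing over $u\in\gf(q)^{*}$ therefore accounts for $(q-1)q/2=\binom{q}{2}$ of the pairs, so no pair falls under $u=0$. Thus $f(x)\neq f(y)$ whenever $x\neq y$, so $f$ is injective, hence a permutation of $\gf(q)$.

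It then remains to check that $\cH(f)$, which plainly has $q+2$ points, is an arc; this is a short check of $3\times3$ determinants. For $(1,0,0)$ together with two affine points $(f(a),a,1),(f(b),b,1)$ the determinant is $a+b\neq0$; for $(0,1,0)$ with two affine points it is $f(a)+f(b)\neq0$ by injectivity; for $(1,0,0),(0,1,0)$ with one affine point it is $1$; and three distinct affine points $(f(a),a,1),(f(b),b,1),(f(c),c,1)$ are collinear only if $f_{u}(a)=f_{u}(b)=f_{u}(c)$ for some $u\in\gf(q)$, which is impossible because $f_{0}=f$ is injective while each $f_{u}$ with $u\neq0$ has all nonempty fibres of size $2$. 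Hence $\cH(f)$ is a $(q+2)$-arc, i.e.\ a hyperoval, which completes the argument. The one genuinely delicate point is the counting step that forces injectivity of $f$; everything else is bookkeeping, as long as one keeps in mind that ``$2$-to-$1$'' means every nonempty fibre has size exactly $2$.
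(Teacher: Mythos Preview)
The paper does not prove this theorem; it merely quotes it from \cite{Masch98}. So there is no ``paper's own proof'' to compare against, and your task reduces to whether your argument stands on its own. It does.

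Your forward direction is clean: lines avoiding $(1,0,0)$ are exactly those of the form $x_{0}=ux_{1}+wx_{2}$, and for $u\neq 0$ such a line meets $\cH(f)$ only in affine points, so the hyperoval property forces $|f_{u}^{-1}(w)|\in\{0,2\}$ for every $w$.

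The converse is also correct, and the counting step you flag as ``delicate'' is indeed the heart of it. Each unordered pair $\{x,y\}\subset\gf(q)$ determines a unique $u=(f(x)+f(y))/(x+y)$, and for $u\neq 0$ the pairs attached to $u$ are precisely the $q/2$ two-element fibres of $f_{u}$; the count $(q-1)\cdot q/2=\binom{q}{2}$ then leaves no room for $u=0$, forcing $f$ injective. Your determinant checks for the arc property are routine and correct; in particular, three affine points being collinear would put them on a line $x_{0}=ux_{1}+wx_{2}$ (such a line never passes through $(1,0,0)$), hence in a single fibre of some $f_{u}$, which is impossible.

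One small remark on the normalization: the paper's definition of oval polynomial bakes in $\deg f<q$ and $f(1)=1$, so strictly the converse as stated needs those. Your parenthetical handles this adequately: reduction modulo $x^{q}-x$ does not change the function, and once you have shown $f$ is injective you know $f(1)\neq f(0)=0$, so the rescaling $f\mapsto f/f(1)$ is available and preserves the $2$-to-$1$ hypothesis. You might make that dependence explicit (injectivity first, then rescale), but the logic is sound.
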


The next theorem gives another characterisation of oval polynomials, where the conditions are called
the slope condition, and will be needed later.

\begin{theorem}\label{thm-J22220}
$f$ is an oval polynomial over $\gf(q)$ if and only if
\begin{enumerate}
\item $f$ is a permutation of $\gf(q)$; and
\item
$$
\frac{f(x)+f(y)}{x+y} \neq \frac{f(x)+f(z)}{x+z}
$$
for all pairwise-distinct $x, y, z$ in $\gf(q)$.
\end{enumerate}
\end{theorem}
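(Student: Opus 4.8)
The plan is to translate the statement into a question about collinearity in $\PG(2,q)$ and then invoke Segre's theorem (Theorem~\ref{thm-hyperovaloply}). First I would dispose of the degenerate case: if $f$ is not a permutation of $\gf(q)$ there is nothing to prove, since then the right-hand side obviously fails, while the left-hand side fails too because every oval polynomial is, by the first condition in Theorem~\ref{thm-hyperovaloply}, a permutation polynomial. So assume from now on that $f$ is a permutation. Then $f(0)\neq f(1)$, so the quotients appearing in condition (2) are well defined, and by replacing $f(x)$ with $(f(x)+f(0))(f(1)+f(0))^{-1}$ and reducing modulo $x^{q}+x$ we may further assume $f(0)=0$, $f(1)=1$ and $\deg f<q$. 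This substitution is induced by the projective transformation $(X,Y,Z)\mapsto(aX+bZ,Y,Z)$ of $\PG(2,q)$, which fixes $(1,0,0)$ and $(0,1,0)$; it changes neither the property of being an oval polynomial nor condition (2) (the slope $\frac{f(x)+f(y)}{x+y}$ is merely rescaled by the constant $(f(1)+f(0))^{-1}$), so it is harmless. After these reductions it remains to prove that, for a permutation $f$ normalised as above, $f$ is an oval polynomial if and only if condition (2) holds.

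By Segre's theorem (Theorem~\ref{thm-hyperovaloply}), together with the uniqueness of the o-polynomial of degree $<q$ representing a given hyperoval (in a hyperoval through $(1,0,0)$ and $(0,1,0)$ no two of the remaining points share the same second coordinate), $f$ is an oval polynomial if and only if
$$
\cH(f)=\{(f(c),c,1):c\in\gf(q)\}\cup\{(1,0,0)\}\cup\{(0,1,0)\}
$$
is a hyperoval, i.e. a set of $q+2$ points of $\PG(2,q)$ no three of which are collinear. Note $\cH(f)$ always has exactly $q+2$ points, since its $q$ affine points have pairwise distinct second coordinates. So the task is to decide when no three points of $\cH(f)$ are collinear, which I would handle by running through the four possible configurations of three distinct points: (i) three affine points; (ii) two affine points together with $(0,1,0)$; (iii) two affine points together with $(1,0,0)$; (iv) one affine point together with both points at infinity. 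Configurations (iii) and (iv) are automatically non-collinear: the relevant $3\times 3$ determinants evaluate to $y+z$ and to $1$, both nonzero since $y\neq z$. Configuration (ii) gives the determinant $f(y)+f(z)$, which is nonzero for all $y\neq z$ precisely because $f$ is injective.

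The only remaining configuration is (i), and this is where the slope condition enters. A short computation in characteristic $2$ gives
$$
\det\begin{pmatrix} f(x) & x & 1 \\ f(y) & y & 1 \\ f(z) & z & 1 \end{pmatrix}
 = f(x)(y+z)+f(y)(x+z)+f(z)(x+y)
 = (f(x)+f(y))(x+z)+(f(x)+f(z))(x+y),
$$
the last equality because the two $f(x)x$ terms cancel. Since $x\neq y$ and $x\neq z$, dividing by $(x+y)(x+z)$ shows that this determinant is nonzero if and only if $\frac{f(x)+f(y)}{x+y}\neq\frac{f(x)+f(z)}{x+z}$. Hence ``no three affine points of $\cH(f)$ are collinear'' is exactly condition (2); combining the four cases, $\cH(f)$ is a hyperoval if and only if (2) holds, and with the displayed equivalence this finishes the proof.

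I would also note an alternative route that sidesteps the normalisation entirely, using Maschietti's characterisation (Theorem~\ref{thm-opoly2to1}) instead of Segre's: assuming $f(0)=0$, one checks directly that $f_u=f(x)+ux$ fails to be $2$-to-$1$ for some $u\in\gf(q)^{*}$ exactly when (1) or (2) fails, since a fiber of $f_u$ of size at least $3$ is the same thing as a triple of distinct $x,y,z$ with $\frac{f(x)+f(y)}{x+y}=u=\frac{f(x)+f(z)}{x+z}$, while excluding fibers of size $1$ (given no larger fibers) is exactly the fact, already recalled in the introduction, that a hyperoval meets every line of $\PG(2,q)$ in $0$ or $2$ points. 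In either approach the computations are routine; the one genuinely delicate point is the bookkeeping at the start — making sure the quotients in (2) are well defined, which forces one to treat non-permutations separately, and keeping the collineation/normalisation step honest so that ``$f$ is an oval polynomial'' really is equivalent to ``$\cH(f)$ is a hyperoval''.
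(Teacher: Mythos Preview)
The paper does not actually supply a proof of Theorem~\ref{thm-J22220}: it is stated as a known characterisation (``the slope condition'') and used later without argument. So there is nothing in the paper to compare against; what you have written is a proof where the paper had none.

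Your main argument via Segre's theorem is correct and is essentially the standard proof in the literature: reduce to the normalised case, identify ``$f$ is an oval polynomial'' with ``$\cH(f)$ is a hyperoval'', and then run through the four types of triples, noting that types (ii)--(iv) are handled by $f$ being a permutation while type (i) is exactly the slope condition. The determinant manipulation in characteristic $2$ is clean and accurate. One small remark: in this paper an oval polynomial is \emph{defined} to satisfy $f(0)=0$, $f(1)=1$ and $\deg f<q$ (see the sentence right after Theorem~\ref{thm-hyperovaloply}), so the normalisation paragraph is, strictly speaking, necessary for the ``if'' direction to be literally true as stated --- you were right to include it. In the alternative route via Theorem~\ref{thm-opoly2to1}, the step ``excluding fibers of size $1$ given no larger fibers'' deserves one more line: once every fiber of $f_u$ has size at most $2$, the map $x\mapsto x+a$, where $a$ is the unique nonzero solution of $f(x+a)+f(x)=ua$ guaranteed by the oval property (or, more directly, the involution pairing each $x$ with the other preimage of $f_u(x)$), shows that singleton fibers do not occur; alternatively one simply cites the hyperoval secant property as you do.
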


\begin{theorem}\label{thm-J2220}
Let $m \geq 3$ be odd and let $f(x)$ be an oval polynomial over $\gf(q)$ with coefficients in $\gf(2)$. Then $f(x)+x+1=0$ has no solution in $\gf(q)$.
\end{theorem}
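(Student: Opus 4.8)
The plan is to argue by contradiction: suppose $a \in \gf(q)$ satisfies $f(a) + a + 1 = 0$, i.e.\ $f(a) + a = 1$, and derive a contradiction by combining the hypothesis that $f$ has coefficients in $\gf(2)$ with the $2$-to-$1$ characterisation of oval polynomials in Theorem~\ref{thm-opoly2to1}.

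First I would extract the arithmetic content of the coefficient hypothesis. Since every coefficient of $f$ lies in $\gf(2)$, the Frobenius automorphism $x \mapsto x^2$ commutes with evaluation of $f$, so $f(b^2) = f(b)^2$ for all $b \in \gf(q)$. Setting $f_1(x) := f(x) + x$, this yields $f_1(b^2) = f(b)^2 + b^2 = (f(b) + b)^2 = f_1(b)^2$ for every $b$. Applying this identity repeatedly starting from $b = a$, where $f_1(a) = f(a) + a = 1$, gives $f_1(a^{2^i}) = 1$ for all $i \geq 0$. In other words, the entire Frobenius orbit $\{a, a^2, a^4, \dots\}$ of $a$ lies in the fibre $f_1^{-1}(1)$.

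Next I would invoke Theorem~\ref{thm-opoly2to1} with $u = 1$: because $f$ is an oval polynomial, $f_1 = f(x) + x$ is $2$-to-$1$ on $\gf(q)$, so $|f_1^{-1}(1)| \leq 2$. Hence the Frobenius orbit of $a$ has at most two elements, which forces $a^4 = a$, i.e.\ $a \in \gf(4) \cap \gf(q)$. Since $m$ is odd, $\gcd(2, m) = 1$ and therefore $\gf(4) \cap \gf(q) = \gf(2)$, so $a \in \{0, 1\}$. But $f(0) = 0$ and $f(1) = 1$ by the definition of an oval polynomial, so $f_1(0) = 0 \neq 1$ and $f_1(1) = 1 + 1 = 0 \neq 1$, contradicting $f_1(a) = 1$. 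This eliminates every candidate $a$ and proves the theorem.

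Each step is short, and I do not anticipate a genuine obstacle; the one idea that matters is that a hypothetical root of $f(x) + x + 1$ must drag its whole Frobenius orbit into the single fibre $f_1^{-1}(1)$, while the $2$-to-$1$ property of $f_1$ forbids that orbit from being large — after which the oddness of $m$ finishes things. Should one wish to avoid Theorem~\ref{thm-opoly2to1}, essentially the same contradiction follows from the slope condition of Theorem~\ref{thm-J22220} applied to $a, a^2, a^4$ (which are pairwise distinct once $a \notin \gf(2)$, and one checks directly that the three slopes $\frac{f(x)+f(y)}{x+y}$ all equal $1$); but the $2$-to-$1$ route keeps the case analysis to a minimum, so that is the one I would write up.
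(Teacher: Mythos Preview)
Your proof is correct and is essentially the same argument as the paper's: both use the $\gf(2)$-coefficient hypothesis to show the Frobenius orbit of a putative root lies in $f_1^{-1}(1)$, invoke Theorem~\ref{thm-opoly2to1} to force that orbit to have at most two elements and hence $a^4=a$, and then use the oddness of $m$ to conclude $a\in\gf(2)$, which is ruled out by $f(0)=0$, $f(1)=1$. The only cosmetic differences are that the paper eliminates $a\in\{0,1\}$ at the outset rather than at the end, and phrases the final step as $x^3=1$ together with $\gcd(2^m-1,3)=1$ instead of $\gf(4)\cap\gf(q)=\gf(2)$.
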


\begin{proof}
By definition, $0$ and $1$ are not solutions of $f(x)+x+1=0$. Suppose $x \in \gf(q) \setminus \{0,1\}$ is a solution of
$f(x)+x+1=0$. Then $x^{2^h}$ is a solution of  $f(x)+x+1=0$ for each nonnegative integer $h$, as the coefficients of
$f(x)$ are in $\gf(2)$ by assumption. In particular, $x$, $x^2$
and $x^4$ are solutions of  $f(x)+x+1=0$.  By Theorem \ref{thm-opoly2to1}, the equation $f(x)+x+1=0$ has at most
two solutions. Since $x \not\in \{0,1\}$, $x^2 \neq x$ and $x^4 \neq x^2$. It then follows that $x^4=x$. Consequently,
$x^3=1$. Since $m$ is odd, $\gcd(2^m-1, 3)=1$. It then follows from $x^3=1$ that $x=1$, which is contrary to the
assumption that $x \not\in \{0,1\}$. This completes the proof.
\end{proof}

Let $m$ be even, and let $\alpha$ be a generator of $\gf(q)^*$. It is easily seen that $\alpha^{(2^m-1)/3}$ is a solution
of $x^2+x+1=0$. Hence, Theorem \ref{thm-J2220} is not true for even $m$.

Theorem \ref{thm-J2220} looks simple, but will play an important role in constructing near MDS codes in this paper.

\section{Near MDS codes with parameters $[q+3, 3, q]$ from oval polynomials}\label{sec-J271}

Let $f$ be a polynomial over $\gf(q)$ with $f(0)=0$ and $f(1)=1$. Let $\alpha$ be a generator of $\gf(q)^*$.
Define
\begin{eqnarray*}
B_f=\left[
\begin{array}{lllllll}
f(0) & f(\alpha^0)  & f(\alpha^1) & \cdots & f(\alpha^{q-2}) & 1 & 0 \\
0 &\alpha^0  & \alpha^1 & \cdots & \alpha^{q-2} & 0 & 1 \\
1      & 1         &   1           & \cdots & 1                    & 0 & 0
\end{array}
\right].
\end{eqnarray*}
By definition, $B_f$ is a $3$ by $q+2$ matrix over $\gf(q)$. Let $\E_f$ denote the linear code over $\gf(q)$ with
generator matrix $B_f$.  As informed in Section \ref{sec-3objects}, $\E_f$ is an MDS code over $\gf(q)$ with
parameters $[q+2, 3, q]$ if $f$ is an oval polynomial over $\gf(q)$. This is the classical construction of MDS codes
with oval polynomials. The task of this section is to prove the following theorem.

\begin{theorem}\label{thm-J271}
Let $m \geq 3$, and let $f$ be an oval polynomial over $\gf(q)$. Then the extended code $\bar{\E}_f$ is an NMDS code
over $\gf(q)$ with parameters $[q+3, 3, q]$ and weight enumerator
\begin{eqnarray*}
1+\frac{(q-1)(q+2)}{2} z^q + \frac{(q-1)q(q+2)}{2}z^{q+1} +\frac{(q-1)q}{2} z^{q+2} + \frac{(q-2)(q-1)q}{2} z^{q+3}.
\end{eqnarray*}
\end{theorem}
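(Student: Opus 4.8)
The plan is to work directly with the generator matrix $B_f$ and its one-column extension. First I would write out explicitly what the extended matrix $\bar B_f$ looks like: append to $B_f$ the column that makes every row sum to zero, i.e. the column $(\,\sigma_0,\sigma_1,\sigma_2\,)^{\mathrm T}$ where $\sigma_0 = \sum_{c\in\gf(q)} f(c)+1$, $\sigma_1 = \sum_{c\in\gf(q)} c = 0$ (for $m\ge 2$), and $\sigma_2 = q\cdot 1 + 1 = 1$ in $\gf(q)$. Using that $f$ is a permutation, $\sum_{c}f(c)=\sum_{c}c=0$, so $\sigma_0 = 1$; thus the extra column is $(1,0,1)^{\mathrm T}$. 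So $\bar B_f$ has columns indexed by $c\in\gf(q)$ equal to $(f(c),c,1)^{\mathrm T}$, plus the three columns $(1,0,0)^{\mathrm T}$, $(0,1,0)^{\mathrm T}$, $(1,0,1)^{\mathrm T}$. I would then count, for a generic codeword $\mathbf c = (a,b,e)\bar B_f$ with $(a,b,e)\ne(0,0,0)$, the number of zero coordinates; equivalently, count zeros among $\{af(c)+bc+e : c\in\gf(q)\}$ and among $\{a, b, a+e\}$ coming from the three special columns. This reduces the whole problem to counting roots in $\gf(q)$ of expressions of the form $af(x)+bx+e$.

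The key case analysis runs over $(a,b,e)$. If $a=0$: then $af(x)+bx+e = bx+e$, which (for $b\ne0$) has exactly one root, and one also tracks the contributions $a=0$, $b$, $e$ from the special columns; the subcase $a=b=0$, $e\ne0$ gives the all-constant word, and $a=0,b\ne0$ gives words of weight $q+2$ or $q+3$ depending on whether $e=0$. If $a\ne0$, normalize to $a=1$. Now $f(x)+bx+e$: by Theorem~\ref{thm-opoly2to1}, $f(x)+bx$ is $2$-to-$1$ when $b\ne0$, so $f(x)+bx = e'$ has either $0$ or $2$ solutions for each value $e'$; hence $f(x)+bx+e=0$ has $0$ or $2$ roots in $\gf(q)$. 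When $b=0$: $f(x)+e=0$ has exactly one root since $f$ is a permutation. Combining the root count from the $q$ ``ordinary'' columns with the $\{0,1\}$-pattern of the three special columns $(1,0,0),(0,1,0),(1,0,1)$ — which contribute the values $a=1$, $b$, and $1+e$ — yields the possible numbers of zero coordinates, hence the possible weights $q,q+1,q+2,q+3$, and shows the minimum weight is $q$ (so the code is AMDS of length $q+3$, dimension $3$).

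For the frequencies I would enumerate solution pairs $(b,e)$ giving each zero-count. The delicate bookkeeping is: given $b\ne0$, for how many $e$ does $f(x)+bx+e$ have two roots? Since $f(x)+bx$ is $2$-to-$1$ from $\gf(q)$ onto a set of size $q/2$, there are exactly $q/2$ values of $e$ with two roots and $q/2$ values with none. One must also subtract the cases where a "root" coincides with behavior already counted and adjust for the three special columns depending on whether $b\in\{0\}$ and whether $1+e=0$. Aggregating over all $(a,b,e)$ up to the scalar $\gf(q)^*$-action (dividing the raw count by $q-1$) should produce the claimed numbers $\frac{(q-1)(q+2)}{2}$, $\frac{(q-1)q(q+2)}{2}$, $\frac{(q-1)q}{2}$, $\frac{(q-2)(q-1)q}{2}$ of codewords of weights $q,q+1,q+2,q+3$; a sanity check is that these sum to $q^3-1$ and that the first Pless power moments ($\sum A_i = q^3-1$, $\sum i A_i = q^2(q-1)(q+3)/\!\!\ldots$) are satisfied. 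Finally, NMDS follows either by checking $d(\bar{\mathcal E}_f^\perp)=3$ directly (no two columns of $\bar B_f$ are proportional and some three are dependent — e.g. $(1,0,0),(0,1,0),(1,0,1)$ — so $\bar B_f$ has no three pairwise-independent... rather, one exhibits a weight-$3$ dual word and shows none of weight $\le 2$), or by invoking Theorem~\ref{thm-DLwtd}: once $A_q$ is known, the dual weight enumerator is forced, and one verifies $A_1^\perp=A_2^\perp=0$, $A_3^\perp\ne0$, i.e. $d+d^\perp = q+3 = n$.

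The main obstacle I anticipate is the exact frequency count in the $a\ne0$ case — in particular, correctly handling the interaction between the root count of $f(x)+bx+e$ over the $q$ ordinary columns and the $0/1$ contributions of the three special columns, so as to avoid double-counting and to get the split between weight $q+2$ and weight $q+3$ right. The $2$-to-$1$ property from Theorem~\ref{thm-opoly2to1} is exactly the tool that makes this tractable, but the edge cases ($b=0$, $e=0$, $e=1$) each shift a codeword between adjacent weight classes and must be tallied carefully.
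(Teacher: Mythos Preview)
Your overall strategy --- direct weight enumeration of $\bar{\E}_f$ by running over $(a,b,e)\in\gf(q)^3$ and counting zeros of $af(x)+bx+e$ on the $q$ ordinary columns together with the three special columns, using the $2$-to-$1$ property of $f(x)+ux$ from Theorem~\ref{thm-opoly2to1} --- is sound and does lead to the stated weight enumerator. It is, however, a genuinely different route from the paper's proof. The paper never enumerates $\bar{\E}_f$ directly: instead it (i) shows $d(\bar{\E}_f^\perp)=3$ by exhibiting a dependent triple of columns and ruling out weight $\le 2$, (ii) deduces $d(\bar{\E}_f)=q$ from the MDS property of $\E_f$, (iii) counts the number $A_3^\perp$ of weight-$3$ dual codewords by a four-case column analysis, (iv) transfers $A_3^\perp$ to $A_q$ via Theorem~\ref{thm-121FW}, and (v) recovers the remaining $A_{q+s}$ from Theorem~\ref{thm-DLwtd}. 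Your approach is more elementary and self-contained (it does not need Theorems~\ref{thm-DLwtd} and \ref{thm-121FW}); the paper's approach is shorter once those structural results are in hand, and it illustrates the NMDS machinery more explicitly.

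There is one concrete slip you should fix before carrying the computation through. Your row sums of $B_f$ are off: row~$2$ of $B_f$ sums to $\sum_{c\in\gf(q)}c+0+1=1$ (you dropped the $1$ from the column $(0,1,0)^{\mathrm T}$), and row~$3$ sums to $q\cdot 1+0+0=0$ (you inserted an extra $+1$). Hence the appended column is $(1,1,0)^{\mathrm T}$, not $(1,0,1)^{\mathrm T}$, and the three special-column values for the codeword $(a,b,e)\bar B_f$ are $a,\ b,\ a+b$, not $a,\ b,\ a+e$. This changes the case split: among $\{a,b,a+b\}$ exactly one entry vanishes when $a=0$, or $b=0$, or $a=b\ne 0$, and none vanish otherwise (the case $a=b=0$ gives all three zero). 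With that correction your bookkeeping goes through cleanly: the contributions to $A_q$ are the $(q-1)$ words with $a=b=0$ plus the $(q-1)\cdot q/2$ words with $a=b\ne 0$ and $e/a$ in the image of $f(x)+x$, giving $A_q=(q-1)(q+2)/2$ as claimed, and the remaining $A_{q+1},A_{q+2},A_{q+3}$ fall out similarly. If you keep the erroneous column $(1,0,1)^{\mathrm T}$ you are analysing a different (non-extended) code and will not recover the stated enumerator.
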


\begin{proof}
It is well known that $\sum_{x \in \gf(q)} x =0$.
Since $f$ is a permutation on $\gf(q)$, we have
$$
\sum_{x \in \gf(q)} f(x)=0.
$$
Then by definition, the extended code $\bar{\E}_f$ has generator matrix
\begin{eqnarray*}
\bar{B}_f=\left[
\begin{array}{llllllll}
f(0) & f(\alpha^0)  & f(\alpha^1) & \cdots & f(\alpha^{q-2}) & 1 & 0 & 1\\
0 &\alpha^0  & \alpha^1 & \cdots & \alpha^{q-2} & 0 & 1 & 1\\
1      & 1         &   1           & \cdots & 1                    & 0 & 0 & 0
\end{array}
\right].
\end{eqnarray*}

Since $\E_f$ is a $[q+2, 3, q]$ MDS code, the dual code $\E_f^\perp$ is a $[q+2, q-1, 4]$ MDS code over $\gf(q)$.
Therefore, any three columns of $B_f$ are linearly independent over $\gf(q)$.  By definition, $\bar{\E}_f$ has length
$q+3$ and dimension $3$. We need to determine the minimum distance $d(\bar{\E}_f)$. To do this, we first settle
the parameters of the dual code $\bar{\E}_f^\perp$.

The dual code $\bar{\E}_f^\perp$ has length $q+3$ and dimension $q$, as $\bar{\E}_f$ has dimension $3$ and length
$q+3$. Note that the last three columns of $\bar{B}_f$ are linearly dependent over $\gf(q)$. This means that $\bar{\E}_f^\perp$
has codewords of Hamming weight $3$. Thus, the minimum distance $d(\bar{\E}_f^\perp) \leq 3$. Note that no column
of $\bar{B}_f$ is the zero vector. We deduce that $d(\bar{\E}_f^\perp) > 1$. Since $\E_f^\perp$ has minimum distance $4$,
any two columns of $B_f$ are linearly independent over $\gf(q)$. To prove that $d(\bar{\E}_f^\perp) >2$, it suffices to
show that the last column of $\bar{B}_f$, i.e., the vector $(1 1 0)^T$, is linearly independent of any other column of $\bar{B}_f$,
which is obvious. Consequently, we have $d(\bar{\E}_f^\perp) = 3$. Hence, $\bar{\E}_f^\perp$ is an almost MDS code with
parameters $[q+3, q, 3]$.

Since $\E_f$ is a $[q+2, 3, q]$ MDS code, by definition the minimum distance $d(\bar{\E}_f) \geq q$. By the Singleton
bound, $d(\bar{\E}_f) \leq q+1$. If $d(\bar{\E}_f) = q+1$, then $\bar{\E}_f$ would be a $[q+3, 3, q+1]$ MDS code, and
$\bar{\E}_f^\perp$ would be an MDS code with parameters $[q+3, q, 4]$, which is contrary to the proved fact that
$\bar{\E}_f^\perp$ is an almost MDS code with parameters $[q+3, q, 3]$. Thus, $d(\bar{\E}_f) = q$ and
$\bar{\E}_f$ is an almost MDS code with parameters $[q+3, 3, q]$.

Finally, we settle the weight distribution of the code $\bar{\E}_f$. To this end, we first determine the number of
codewords of weight $3$ in the dual code $\bar{\E}_f^\perp$. The discussions above showed that any codeword of weight $3$
in  $\bar{\E}_f^\perp$ must have a nonzero coordinate in the last position. Hence, we count the number of codewords of weight $3$ in $\bar{\E}_f^\perp$ by considering the following four cases.

{\em Case 1:} Consider the following matrix equation
\begin{eqnarray*}
\left[
\begin{array}{lll}
f(x)  & f(y) & 1 \\
x    &   y    &  1 \\
1   &     1   &   0
\end{array}
\right]
\left[
\begin{array}{l}
a \\
b \\
c
\end{array}
\right] = \bzero,
\end{eqnarray*}
where $a, b, c \in \gf(q)^*$, $x, y \in \gf(q)$ and $x \neq y$.  The matrix equation above is the same as the system
of equations:
\begin{eqnarray*}
\left\{
\begin{array}{l}
f(x)+f(y)+c/a =0, \\
x+y+c/a=0, \\
a=b,
\end{array}
\right.
\end{eqnarray*}
which has the same number of solutions  as the following system of equations
\begin{eqnarray}\label{eqn-J2811}
f(x+c/a)+f(x)=c/a, \ \ a=b.
\end{eqnarray}
Since $f$ is an oval polynomial, for any fixed $x \in \gf(q)$ the polynomial $(f(x+z)+f(x))z^{q-2}$ is a permutation on
$\gf(q)$. Hence, there is a unique $z \in \gf(q)$ such that $(f(x+z)+f(x))z^{q-2}=1$. Hence, for each fixed $x \in \gf(q)$
there is a unique $z\in \gf(q)$ such that $f(x+z)+f(x)=z$. It then follows that the number of solutions $(x, a, b, c)$
with $x \in \gf(q)$ and $\{a, b, c\} \subset \gf(q)^*$ of (\ref{eqn-J2811}) is $q(q-1)$.  Therefore, the total number of
codewords of weight 3 whose first two nonzero coordinates are among the first $q$ positions and the last nonzero
coordinate is in the last position is equal to $q(q-1)/2$.

{\em Case 2:} Note that the matrix
\begin{eqnarray*}
\left[
\begin{array}{lll}
f(x) & 1 & 1 \\
x  & 0  & 1 \\
1 & 0 & 0
\end{array}
\right]
\end{eqnarray*}
has rank $3$ for each $x \in \gf(q)$. We deduce that $\bar{\E}_f^\perp$ does not have a codeword of weight 3 whose
first nonzero coordinate is among the first $q$ positions and the remaining two are on the $(q+1)$-th and $(q+3)$-th
positions.

{\em Case 3:} Note that the matrix
\begin{eqnarray*}
\left[
\begin{array}{lll}
f(x) & 0 & 1 \\
x  & 1  & 1 \\
1 & 0 & 0
\end{array}
\right]
\end{eqnarray*}
has rank $3$ for each $x \in \gf(q)$. We deduce that $\bar{\E}_f^\perp$ does not have a codeword of weight 3 whose
first nonzero coordinate is among the first $q$ positions and the remaining two are on the $(q+2)$-th and $(q+3)$-th
positions.

{\em Case 4:} Note that the matrix
\begin{eqnarray*}
\left[
\begin{array}{lll}
1 & 0 & 1 \\
0  & 1  & 1 \\
0 & 0 & 0
\end{array}
\right]
\end{eqnarray*}
has rank $2$. We deduce that $\bar{\E}_f^\perp$ has $q-1$ codewords of weight 3 whose nonzero coordinates are
in the last three positions.

Summarising the conclusions in Cases 1--4, we know that the total number of codewords of weight 3 in
$\bar{\E}_f^\perp$ is $(q-1)(q+2)/2$. By Theorem \ref{thm-121FW}, the number of codewords of weight $q$ in
$\bar{\E}_f$ is $(q-1)(q+2)/2$. The desired weight enumerator of $\bar{\E}_f$ then follows from Theorem \ref{thm-DLwtd}.
\end{proof}

\begin{example}
Let $m=3$. Then the code $\bar{\E}_{x^6}$ over $\gf(2^3)$ has parameters $[11, 3, 8]$ and weight enumerator
$
1 + 35z^8 + 280 z^9  +  28z^{10} + 168z^{11}.
$
\end{example}

Notice that the construction of NMDS codes in Theorem \ref{thm-J271} works for every oval polynomial over $\gf(q)$,
and is thus general.
With the known nine infinite families of oval polynomials documented in Theorem \ref{thm-knownopolys},
nine infinite families of $[q+3, 3, q]$ NMDS codes over $\gf(q)$ are obtained via Theorem \ref{thm-J271}.
For any arc $\cA$ of $\PG(2, q)$, it is well known that $|\cA| \leq q+2$. Hence, there is no MDS code over
$\gf(q)$ with parameters $[q+3, 3, q+1]$, and these nine infinite families of  $[q+3, 3, q]$ NMDS codes over $\gf(q)$ are thus
distance-optimal.

\section{Near MDS codes with parameters $[q+1, 3, q-2]$ from oval polynomials}\label{sec-extMDSc}

Let $f$ be a polynomial over $\gf(q)$ with $f(0)=0$ and $f(1)=1$. Let $\alpha$ be a generator of $\gf(q)^*$.
Define
\begin{eqnarray}
G_f=\left[
\begin{array}{llllll}
f(\alpha^0)  & f(\alpha^1) & \cdots & f(\alpha^{q-2}) & 0 & 1 \\
\alpha^0  & \alpha^1 & \cdots & \alpha^{q-2} & 1 & 0 \\
1               &   1           & \cdots & 1                    & 1 & 1
\end{array}
\right].
\end{eqnarray}
By definition, $G_f$ is a $3$ by $q+1$ matrix over $\gf(q)$. Let $\C_f$ denote the linear code over $\gf(q)$ with
generator matrix $G_f$.

\begin{theorem}\label{thm-nmdscodej231}
Let $m \geq 3$ be odd and let $f(x)$ be an oval polynomial over $\gf(q)$ with coefficients in $\gf(2)$. Then $\C_f$ is a $[q+1, 3, q-2]$ NMDS code
over $\gf(q)$ with weight enumerator
\begin{eqnarray*}
A(z)=1 + (q-1)(q-2)z^{q-2} + \frac{(q-1)(q^2-5q+12)}{2} z^{q-1} + \\
(q-1)(4q-5) z^{q} + \frac{(q-1)(q^2-3q+4)}{2} z^{q+1}.
\end{eqnarray*}
\end{theorem}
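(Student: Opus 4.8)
The plan is to compute the weights of all codewords of $\C_f$ directly from the generator matrix $G_f$ by reducing to a root‑counting question for $f$. A generic codeword is $(a,b,c)G_f$ with $(a,b,c)\in\gf(q)^3$; its coordinates are the values $af(x)+bx+c$ as $x$ runs over $\gf(q)^*=\{\alpha^0,\dots,\alpha^{q-2}\}$, together with the two extra entries $b+c$ and $a+c$. Writing $N(a,b,c)=\#\{x\in\gf(q)^*:af(x)+bx+c=0\}$ and using that $\gf(q)$ has characteristic $2$, one gets
\[
\wt\bigl((a,b,c)G_f\bigr)=(q-1)-N(a,b,c)+[b\neq c]+[a\neq c],
\]
where $[\,\cdot\,]$ is the Iverson bracket. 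The three structural inputs I will use to control $N$ are: $f$ is a permutation of $\gf(q)$ with $f(0)=0$, $f(1)=1$; for each $u\in\gf(q)^*$ the polynomial $f_u(x)=f(x)+ux$ is $2$-to-$1$ (Theorem~\ref{thm-opoly2to1}); and, decisively, $f(x)+x+1\neq 0$ for all $x\in\gf(q)$ (Theorem~\ref{thm-J2220}), which is the one place where oddness of $m$ and the $\gf(2)$‑rationality of $f$ enter.

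To pin down $d(\C_f)$, I split on whether $a=0$. If $a=0$, then $N(0,b,c)\leq 1$ for every nonzero such codeword, and a short enumeration over the four possibilities ($b=0$ or not, $c=0$ or not) shows every nonzero such codeword has weight $\geq q-1$. If $a\neq 0$, scaling reduces to $a=1$; then $N(1,b,c)\leq 1$ if $b=0$ (because $f$ is a permutation), $N(1,b,0)=1$ if $b\neq 0$ (as $x=0$ is one of the two roots of $f_b$), and $N(1,b,c)\in\{0,2\}$ if $b\neq 0$ and $c\neq 0$ (because $f_b$ is $2$-to-$1$ with $f_b(0)=0\neq c$). In the only regime where $N=2$, namely $b\neq 0$, $c\neq 0$, the weight equals $q-3+[b\neq c]+[c\neq 1]$; the value $q-3$ would force $b=c=1$, hence $N(1,1,1)=\#\{x\in\gf(q)^*:f(x)+x+1=0\}=2$, contradicting Theorem~\ref{thm-J2220}. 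Consequently the only weights that occur are $q-2,q-1,q,q+1$, the value $q-2$ does occur, and a codeword has weight $q-2$ exactly when $a\neq 0$ and, after scaling to $a=1$, one has $b\neq 0$, $c\neq 0$, $N(1,b,c)=2$, and exactly one of $b=c$ and $c=1$ holds. Hence $d(\C_f)=q-2$ and $\C_f$ is an AMDS $[q+1,3,q-2]$ code.

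Next I show $d(\C_f^{\perp})=3$. No column of $G_f$ is zero, and no two columns are proportional: two columns $(f(x),x,1)$ and $(f(y),y,1)$ would force $x=y$; such a column against $(0,1,1)$ or $(1,0,1)$ fails since $f(1)=1\neq 0$ and $0\notin\gf(q)^*$; and $(0,1,1),(1,0,1)$ are visibly not proportional. Thus $d(\C_f^{\perp})\geq 3$. For the reverse inequality I reuse the previous step: a weight-$(q-2)$ codeword of $\C_f$ is a nonzero linear form on $\gf(q)^3$ that vanishes at exactly $(q+1)-(q-2)=3$ of the columns of $G_f$, and those three columns lie in the $2$‑dimensional kernel of the form, hence are linearly dependent; this yields a weight-$3$ codeword in $\C_f^{\perp}$. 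So $d(\C_f^{\perp})=3$, $\C_f^{\perp}$ is an AMDS $[q+1,q-2,3]$ code, and $\C_f$ is NMDS.

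For the weight enumerator it is enough to determine $A_{q-2}$, after which Theorem~\ref{thm-DLwtd} with $n=q+1$, $k=3$, $s=1,2,3$ delivers $A_{q-1},A_q,A_{q+1}$. Since every weight-$(q-2)$ codeword has $a\neq 0$, the scaling above gives $A_{q-2}=(q-1)M$, where $M$ counts the pairs $(b,c)$ with $\wt\bigl((1,b,c)G_f\bigr)=q-2$; by the previous analysis the set of such $(b,c)$ is the disjoint union of the families (i) $b=c\notin\{0,1\}$ with $f(x)+cx=c$ having two roots, and (ii) $c=1$, $b\notin\{0,1\}$ with $f(x)+bx=1$ having two roots. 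I count each family by its incidence pairs: for (i) the defining equation forces $x\notin\{0,1\}$ and then $c=f(x)/(1+x)$, and Theorem~\ref{thm-J2220} guarantees $c\notin\{0,1\}$, so there are exactly $q-2$ pairs $(x,c)$; since each admissible $c$ has precisely two preimages $x$, family (i) has $(q-2)/2$ members, and the same count via $b=(1+f(x))/x$ shows family (ii) also has $(q-2)/2$ members. Hence $M=q-2$, $A_{q-2}=(q-1)(q-2)$, and substituting into Theorem~\ref{thm-DLwtd} gives the stated $A(z)$ (a convenient check is that the coefficients sum to $q^3$). The part I expect to be most delicate is the case bookkeeping in the weight computation and in the incidence counts — in particular, invoking Theorem~\ref{thm-J2220} at precisely the right steps, since it is genuinely indispensable: without it the weight $q-3$ would appear and the enumerator would differ, consistent with the hypothesis that $m$ is odd.
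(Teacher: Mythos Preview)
Your proof is correct. The route differs from the paper's in organisation rather than in substance: the paper first analyses $\C_f^\perp$, counting weight-$3$ dual codewords by a four-case study of which three columns of $G_f$ can be dependent (its Cases~1.1--1.4), then invokes the Faldum--Willems correspondence (Theorem~\ref{thm-121FW}) to transfer that count to $A_{q-2}$ in $\C_f$, and only afterwards bounds $d(\C_f)$ from below by a separate contradiction argument (Cases~2.1--2.2). You instead work entirely on the primal side via the weight formula $\wt\bigl((a,b,c)G_f\bigr)=(q-1)-N(a,b,c)+[b\ne c]+[a\ne c]$, extracting $d(\C_f)=q-2$ and $A_{q-2}$ in one pass; your families (i) and (ii) are exactly the paper's Cases~1.4 and~1.3 seen from the other side, and your exclusion of weight $q-3$ is the primal analogue of the paper's Case~1.1. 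The payoff of your arrangement is that Theorem~\ref{thm-121FW} is not needed at all, and $d(\C_f^\perp)\le 3$ comes for free from the existence of a weight-$(q-2)$ codeword; the paper's arrangement, on the other hand, makes the NMDS structure of the dual explicit first. Both feed the same $A_{q-2}=(q-1)(q-2)$ into Theorem~\ref{thm-DLwtd} to finish.
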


\begin{proof}
We first prove that the dimension $\dim(\C_f)$ of $\C_f$ is $3$.
Let $\bg_1$, $\bg_2$ and $\bg_3$ denote the first, second and third rows of $G_f$, respectively.
Assume that $a \bg_1 + b \bg_2 + c \bg_3 =0$ for three elements $a, b$ and $c$ in $\gf(q)$,
where at least one of the elements in $\{a,b,c\}$ is nonzero.
By the definition of the last two columns of $G_f$, any two rows of $G_f$ are linearly independent
over $\gf(q)$. Consequently, $\dim(\C_f) \geq 2$ and $abc \ne 0$.  It then follows from  $a \bg_1 + b \bg_2 + c \bg_3 =0$
that
\begin{eqnarray}
\left\{
\begin{array}{l}
a=b=c \ne 0, \\
f(x)+x+1=0 \mbox{ for all } x \in \gf(q)^*.
\end{array}
\right.
\end{eqnarray}
We have then $f(1)=0$, which is contrary to our assumption that $f(1)=1$. Therefore, $\dim(\C_f)=3$. Notice
that only the conditions that $f(0)=0$ and $f(1)=1$ guarantee that the dimension of the code $\C_f$ is $3$.

\subsubsection*{We now prove that  $\C_f^\perp$ has parameters $[q+1, q-2, 3]$.}

Clearly $\dim(\C_f^\perp)=q+1-\dim(\C_f)=q-2$. Since no column of $G_f$ is the zero vector, the minimum distance
$d(\C_f^\perp) \geq 2$.
It is straightforward to prove that any two columns of $G_f$ are linearly independent over $\gf(q)$. Hence,
$d(\C_f^\perp) > 2$. We now prove that $d(\C_f^\perp) = 3$, and compute the total number of codewords of weight
$3$ in $\C_f^\perp$. We need to consider several cases below.

{\it Case 1.1:} Let $x \in \gf(q)^*$. Consider the following matrix
\begin{eqnarray*}
M_{1,1}=\left[
\begin{array}{lll}
f(x) & 0 & 1 \\
x    &  1 & 0 \\
1   &   1 &  1
\end{array}
\right],
\end{eqnarray*}
which is a submatrix of the generator matrix $G_f$. By Theorem \ref{thm-J2220}, $f(x)+x+1 \neq 0$. Therefore,
$M_{1,1}$ has rank $3$. Consequently, $\C_f^\perp$ does not have a codeword of weight $3$ whose last two coordinates
are nonzero.

{\it Case 1.2:} Let $x, y, z$ be three pairwise distinct elements in $\gf(q)^*$. Consider the following matrix
\begin{eqnarray*}
M_{1,2,1}=\left[
\begin{array}{lll}
f(x) & f(y) & f(z) \\
x    &  y & z \\
1   &   1 &  1
\end{array}
\right],
\end{eqnarray*}
which is a submatrix of the generator matrix $G_f$. Note that $M_{1,2,1}$ has the same rank as the matrix
\begin{eqnarray*}
M_{1,2,2}=\left[
\begin{array}{lll}
f(x)+f(z) & f(y)+f(z) & f(z) \\
x+z    &  y+z & z \\
0  &   0 &  1
\end{array}
\right].
\end{eqnarray*}
We have
$$
|M_{1,2,2}|=(f(x)+f(z))(y+z)+(f(y)+f(z))(x+z).
$$
By Theorem \ref{thm-J22220}, $|M_{1,2,2}| \neq 0$. Therefore,
$M_{1,2,1}$ has rank $3$.
Consequently, $\C_f^\perp$ does not have a codeword of weight $3$ whose nonzero coordinates
are in the first $q-1$ positions.

{\it Case 1.3:} Let $x, y,$ be two distinct elements in $\gf(q)^*$. Consider the following matrix
\begin{eqnarray*}
M_{1,3,1}=\left[
\begin{array}{lll}
f(x) & f(y) & 1 \\
x    &  y & 0 \\
1   &   1 &  1
\end{array}
\right],
\end{eqnarray*}
which is a submatrix of the generator matrix $G_f$. Note that $M_{1,3,1}$ has the same rank as the matrix
\begin{eqnarray*}
M_{1,3,2}=\left[
\begin{array}{lll}
f(x)+1 & f(y)+1 & 1 \\
x    &  y & 0 \\
0  &   0 &  1
\end{array}
\right].
\end{eqnarray*}
We have
$$
|M_{1,3,2}|=(f(x)+1)y+(f(y)+1)x.
$$
If one of $x$ and $y$ is $1$, then $|M_{1,3,2}| \neq 0$ as $x \neq y$. We now calculate the pairs of distinct
$x$ and $y$ in $\gf(q) \setminus \{0, 1\}$ such that $|M_{1,3,2}| =0$.

For each $x \in \gf(q) \setminus \{0, 1\}$, let $a=(f(x)+1)/x$. Then $a \neq 0$. By Theorem \ref{thm-J2220},
$a \neq 1$.
By Theorem \ref{thm-opoly2to1},
$f(z)+az$ is $2$-to-$1$. Thus, there is another unique element $y \in \gf(q) \setminus \{0, 1\}$ such that
$$
f(y)+ay=1=f(x)+ax.
$$
Thus for this pair of distinct $x$ and $y$, we have $|M_{1,3,2}| =0$. Hence, the total number of distinct pairs
$x$ and $y$ in $ \gf(q) \setminus \{0, 1\}$ such that $|M_{1,3,2}| =0$ is equal to $(q-2)/2$.
Consequently, in $\C_f^\perp$  the total number of codewords of weight $3$ whose two nonzero coordinates
are in the first $q-1$ positions and the other nonzero coordinate is in the $(q+1)$-th positionn is equal to
$(q-2)(q-1)/2$.

{\it Case 1.4:} Let $x, y,$ be two distinct elements in $\gf(q)^*$. Consider the following matrix
\begin{eqnarray*}
M_{1,4,1}=\left[
\begin{array}{lll}
f(x) & f(y) & 0 \\
x    &  y & 1 \\
1   &   1 &  1
\end{array}
\right],
\end{eqnarray*}
which is a submatrix of the generator matrix $G_f$. Note that $M_{1,4,1}$ has the same rank as the matrix
\begin{eqnarray*}
M_{1,4,2}=\left[
\begin{array}{lll}
f(x) & f(y) & 0 \\
x +1   &  y+1 & 1 \\
0  &   0 &  1
\end{array}
\right].
\end{eqnarray*}
We have
$$
|M_{1,4,2}|=f(x)(y+1)+f(y)(x+1).
$$

Choose any $y \in \gf(q) \setminus \{0,1\}$. Define $a=f(y)/(y+1)$. Then $a \neq 0$. By Theorem \ref{thm-J2220},
$a \neq 1$. Note that
$$
f(y)+ay=a.
$$
By Theorem \ref{thm-opoly2to1}, $f(z)+az$ is 2-to-1. Hence, there is an element $x \in \gf(q)^*$ such that $x \neq y$
and
$$
f(x)+ax=a.
$$
For this pair $(x, y)$,
$$
|M_{1,4,2}|=f(x)(y+1)+f(y)(x+1) = 0.
$$

Conversely, let $x$ and $y$ be two distinct elements in $\gf(q)^*$ such that
$$
|M_{1,4,2}|=f(x)(y+1)+f(y)(x+1) = 0.
$$
Then
$$
\frac{f(x)}{x+1} = \frac{f(y)}{y+1} = a
$$
for some $a \in \gf(q)$. Since $x \neq y$ and $f$ is bijective, $a \neq 0$. Thus, the total number of distinct $x$ and
$y$ in $\gf(q)^*$ such that $|M_{1,4,2}|=0$ is equal to $(q-2)/2$.
Consequently, in $\C_f^\perp$  the total number of codewords of weight $3$ whose two nonzero coordinates
are in the first $q-1$ positions and the other nonzero coordinate is in the $q$-th positionn is equal to
$(q-2)(q-1)/2$.

Summarizing the discussions in Cases 1.1, 1.2, 1.3 and 1.4, we deduce that the total number of codewords of weight
$3$ in $\C_f^\perp$ is $(q-1)(q-2)$ and the minimum distance $d(\C_f^\perp)=3$. Thus, $\C_f^\perp$ has parameters
$[q+1, q-2, 3]$ and is an almost MDS code.

\subsubsection*{We then prove that  the minimum distance $d(\C_f) = q-2$.}

On the contrary, suppose that  $d(\C_f) \leq q-3=q+1-4$. Let $\bc=a \bg_1 + b \bg_2 + c \bg_3$ be a minimum
weight codeword in $\C_f$. Then, at least four coordinates in $\bc$ are zero. We now consider the following two
cases.

{\it Case 2.1: Suppose that the last two coordinates in $\bc$ are zero.} Then there exist two distinct elements
$x$ and $y$ in $\gf(q)$ such that
\begin{eqnarray}\label{eqn-j221}
\left\{
\begin{array}{r}
af(x)+bx+c = 0, \\
af(y)+by+c = 0, \\
b+c=0, \\
a+c=0,  \\
\end{array}
\right.
\end{eqnarray}
where $a, b, c$ are the constants for defining the minimum weight codeword $\bc=a\bg_1 + b\bg_2 + c \bg_3$.
It follows from (\ref{eqn-j221}) that
$$
f(x)+x+1=0 \mbox{ and } f(y)+y+1=0.
$$
This is contrary to Theorem \ref{thm-J2220}.

{\it Case 2.2: Suppose that at most one of the last two coordinates in $\bc$ is zero.} In this case, there are three pairwise
distinct elements $x, y, z$ in $\gf(q)^*$ such that
\begin{eqnarray}
\left[
\begin{array}{ccc}
f(x) & x & 1 \\
f(y) & y & 1 \\
f(z) & z & 1 \\
\end{array}
\right]
\left[
\begin{array}{c}
a \\
b \\
c
\end{array}
\right]
=0.
\end{eqnarray}
Clearly, the rank of the matrix
\begin{eqnarray*}
M_1 = \left[
\begin{array}{ccc}
f(x) & x & 1 \\
f(y) & y & 1 \\
f(z) & z & 1 \\
\end{array}
\right]
\end{eqnarray*}
is the same as the rank of the matrix
\begin{eqnarray*}
M_2 = \left[
\begin{array}{ccc}
f(x)+f(z) & x+z & 0 \\
f(y)+f(z) & y+z &  0\\
f(z) & z & 1 \\
\end{array}
\right].
\end{eqnarray*}
It follows from Theorem \ref{thm-J22220} that the determinant
$$
|M_2|=(f(x)+f(z))(y+z)+(f(y)+f(z))(x+z) \neq 0.
$$
Since $M_1$ has full rank, $a=b=c=0$ and $\bc=0$. This is contrary to the fact that $\bc$ is a minimum weight codeword
in $\C$.

Summarizing the discussions in Cases 2.1 and 2.2 proved that $d(\C_f) \geq q-2$. By the Singleton bound,  $d(\C_f) \leq q-1$.
If $d(\C_f) = q-1$, then $\C_f$ would be an MDS code with parameters $[q+1, 3, q-1]$ and $\C_f^\perp$ would be an MDS
code with parameters $[q+1, q-2, 4]$, which is contrary to the proved fact that $d(\C_f^\perp)=3$. We then arrived at the
conclusion that $d(\C_f) = q-2$. Consequently, $\C_f$ is an almost MDS code with parameters $[q+1, 3, q-2]$. By definition,
$\C_f$ is an NMDS code.

It then follows from Theorem \ref{thm-121FW} that the total number $A_{q-2}$ of minimum weight codewords in $\C_f$
is equal to the total number of codewords of weight $3$ in $\C_f^\perp$, and is $(q-1)(q-2)$. The desired conclusion
on the weight enumerator of $\C_f$ then follows from  Theorem \ref{thm-DLwtd}. This completes the proof of this theorem.
\end{proof}

\begin{example}
Let $m=3$ and $f(x)=x^6$. Then the code  $\C_f$ has parameters $[9, 3, 6]$ and weight enumerator
$$
1+42z^6 + 126z^7 + 189z^8  +  154z^{9}.
$$
\end{example}

With the first seven families of oval polynomials documented in Theorem \ref{thm-knownopolys},  we have constructed seven infinite
families of near MDS codes over $\gf(q)$ with parameters $[q+1, 3, q-2]$ via Theorem \ref{thm-nmdscodej231}.
Note that this construction may not work for the Subiaco and Adelaide oval polynomials in general.

\section{Near MDS codes with parameters $[q+2, 3, q-1]$ from oval polynomials}\label{sec-J272}

Let $f$ be a polynomial over $\gf(q)$ with $f(0)=0$ and $f(1)=1$. Let $\alpha$ be a generator of $\gf(q)^*$.
Define
\begin{eqnarray}
\bar{G}_f=\left[
\begin{array}{lllllll}
f(0) &f(\alpha^0)  & f(\alpha^1) & \cdots & f(\alpha^{q-2}) & 0 & 1 \\
0 &\alpha^0  & \alpha^1 & \cdots & \alpha^{q-2} & 1 & 0 \\
1 &1               &   1           & \cdots & 1                    & 1 & 1
\end{array}
\right].
\end{eqnarray}
By definition, $\bar{G}_f$ is a $3$ by $q+2$ matrix over $\gf(q)$. Let $\bar{\C}_f$ denote the linear code over $\gf(q)$ with
generator matrix $\bar{G}_f$.

\begin{theorem}\label{thm-nmdscodej232}
Let $m \geq 3$ be odd and let $f(x)$ be an oval polynomial over $\gf(q)$ with coefficients in $\gf(2)$. Then $\bar{\C}_f$ is a $[q+2, 3, q-1]$ NMDS code
over $\gf(q)$ with weight enumerator
\begin{eqnarray*}
\bar{A}(z)=1 + (q-1)(q-2)z^{q-1} + \frac{(q-1)(q^2-3q+14)}{2} z^{q} + \\
3(q-1)(q-2) z^{q+1} + \frac{(q-1)(q^2-3q+4)}{2} z^{q+2}.
\end{eqnarray*}
\end{theorem}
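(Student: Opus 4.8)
The plan is to mimic the proof of Theorem~\ref{thm-nmdscodej231} almost verbatim, using that $\bar{G}_f$ differs from $G_f$ only by the extra column $(f(0),0,1)^{T}=(0,0,1)^{T}$ (equivalently, $\bar{\C}_f$ is $\C_f$ with the coordinate ``evaluation at $x=0$'' adjoined). First I would verify $\dim(\bar{\C}_f)=3$: if $a\bar{\bg}_1+b\bar{\bg}_2+c\bar{\bg}_3=\bzero$ for the three rows of $\bar{G}_f$, then the three columns $(0,0,1)^{T}$, $(0,1,1)^{T}$, $(1,0,1)^{T}$ (whose $3\times 3$ determinant is $-1\neq 0$) already force $a=b=c=0$; alternatively one repeats the argument of Theorem~\ref{thm-nmdscodej231}, where $f(1)=1$ is what is actually needed.

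Next I would determine $\bar{\C}_f^{\perp}$, which has length $q+2$ and dimension $q-1$. No column of $\bar{G}_f$ is zero and any two columns are linearly independent (the only checks needing thought involve the three special columns), so $d(\bar{\C}_f^{\perp})\ge 3$. To obtain $d(\bar{\C}_f^{\perp})=3$ and the number $A_3(\bar{\C}_f^{\perp})$ of weight-$3$ codewords, I would count linearly dependent triples of columns of $\bar{G}_f$, since each such triple contributes exactly $q-1$ weight-$3$ codewords (any two columns being independent). I organise the triples by how many special columns they contain. Three columns $(f(x),x,1)^{T}$ with pairwise-distinct $x$-values in $\gf(q)$ (this covers the value $0$, i.e.\ the new column) are independent by the slope condition of Theorem~\ref{thm-J22220}. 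Two columns $(f(x),x,1)^{T},(f(y),y,1)^{T}$ together with $(0,1,1)^{T}$ have determinant $f(x)(y+1)+f(y)(x+1)$, and exactly as in Case~1.4 of Theorem~\ref{thm-nmdscodej231} (using Theorems~\ref{thm-J2220} and~\ref{thm-opoly2to1}) this vanishes for precisely $(q-2)/2$ unordered pairs; with $(1,0,1)^{T}$ instead, the determinant is $(f(x)+1)y+(f(y)+1)x$, vanishing for $(q-2)/2$ unordered pairs (Case~1.3). Finally, a column $(f(x),x,1)^{T}$ together with both $(0,1,1)^{T}$ and $(1,0,1)^{T}$ has determinant $f(x)+x+1\neq 0$ by Theorem~\ref{thm-J2220} (and $-1$ when the first column is $(0,0,1)^{T}$), so no dependency of this shape occurs. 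Hence there are $q-2$ dependent triples, $d(\bar{\C}_f^{\perp})=3$, and $A_3(\bar{\C}_f^{\perp})=(q-1)(q-2)$; that is, $\bar{\C}_f^{\perp}$ is an almost MDS code with parameters $[q+2,q-1,3]$.

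Then I would show $d(\bar{\C}_f)=q-1$. It suffices to prove no nonzero codeword has at least $4$ zero coordinates, i.e.\ that no four columns of $\bar{G}_f$ lie in a common $2$-dimensional subspace of $\gf(q)^{3}$; but any four columns contain an independent triple (three of the $(f(x),x,1)^{T}$, independent by Theorem~\ref{thm-J22220}; or two such columns together with one special column, in which case three suitable columns already form an independent triple; or two such columns with $(0,1,1)^{T},(1,0,1)^{T}$, where $\{(f(x),x,1)^{T},(0,1,1)^{T},(1,0,1)^{T}\}$ is independent since $f(x)+x+1\neq 0$). Thus $d(\bar{\C}_f)\ge q-1$. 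By the Singleton bound $d(\bar{\C}_f)\le q$, and $d(\bar{\C}_f)=q$ would make $\bar{\C}_f$ a $[q+2,3,q]$ MDS code with $[q+2,q-1,4]$ MDS dual, contradicting $d(\bar{\C}_f^{\perp})=3$; hence $d(\bar{\C}_f)=q-1$, so $\bar{\C}_f$ is almost MDS and, together with the previous paragraph, near MDS. Finally, Theorem~\ref{thm-121FW} gives $A_{q-1}(\bar{\C}_f)=A_3(\bar{\C}_f^{\perp})=(q-1)(q-2)$, and substituting this into the formulas of Theorem~\ref{thm-DLwtd} (with $n=q+2$, $k=3$) yields the stated enumerator $\bar{A}(z)$.

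The main obstacle is the precise enumeration of vanishing determinants in the two mixed cases: one has to dispose of the boundary coincidences ($x$ or $y$ in $\{0,1\}$, or the relevant slope $a=f(x)/(x+1)$ resp.\ $(f(x)+1)/x$ equal to $0$ or $1$) before the $2$-to-$1$ property of $f(x)+ax$ (Theorem~\ref{thm-opoly2to1}) pins the count down to $(q-2)/2$, and this is exactly where the hypotheses ``$m$ odd'' and ``$f$ has coefficients in $\gf(2)$'' are used, through Theorem~\ref{thm-J2220}. Everything else is routine, and the determinant computations of Cases~1.1--1.4 and~2.1--2.2 in the proof of Theorem~\ref{thm-nmdscodej231} transfer with only cosmetic changes.
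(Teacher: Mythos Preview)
Your proposal is correct and follows exactly the route the paper has in mind: the paper itself omits the proof, stating only that ``a slight modification of the proof of Theorem~\ref{thm-nmdscodej231}'' suffices, and your argument is precisely that modification. Your key observation that the extra column $(f(0),0,1)^{T}=(0,0,1)^{T}$ can be absorbed into the family $(f(x),x,1)^{T}$, $x\in\gf(q)$, so that Theorems~\ref{thm-J22220}, \ref{thm-J2220} and~\ref{thm-opoly2to1} continue to handle all cases with the same counts $(q-2)/2$, is exactly the cosmetic change needed, and the remaining steps (Singleton bound, Theorem~\ref{thm-121FW}, Theorem~\ref{thm-DLwtd}) go through verbatim.
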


A slight modification of the proof of Theorem \ref{thm-nmdscodej231} gives a proof of Theorem \ref{thm-nmdscodej232}.
The details of the proof are omitted here.

\begin{example}
Let $m=3$ and $f(x)=x^6$. Then the code  $\bar{\C}_f$ has parameters $[10, 3, 7]$ and weight enumerator
$$
1+42z^7 + 189z^8 + 126z^9  +  154z^{10}.
$$
\end{example}

With the first seven families of oval polynomials documented in Theorem \ref{thm-knownopolys},  we have constructed seven infinite
families of near MDS codes over $\gf(q)$ with parameters $[q+2, 3, q-1]$ via Theorem
\ref{thm-nmdscodej232}. Note that the construction of this section may not work for
the Subiaco and Adelaide oval polynomials in general.

\section{Summary and concluding remarks}

Let $r$ be a prime power. For any $[n, k, d]$ linear code $\C$ over $\gf(r)$, the Griesmer bound says that
$$
n \geq \sum_{i=0}^k \left\lceil \frac{d}{r^i} \right\rceil.
$$
An $[n, k, d]$ linear code $\C$ over $\gf(r)$ is said to be \emph{almost optimal} with respect to the  Griesmer bound if
$$
n-1 = \sum_{i=0}^k \left\lceil \frac{d}{r^i} \right\rceil.
$$

The contributions of this paper are the following:
\begin{itemize}
\item Nine infinite families of near MDS codes over $\gf(2^m)$ with parameters $[2^m+3, 3, 2^m]$ via
          Theorems \ref{thm-J271} and \ref{thm-knownopolys}, which are distance-optimal.
          The construction of the MDS codes over $\gf(q)$ with parameters $[q+2, 3, q]$ in Section \ref{sec-J271}
is classical and thus not new.
Our contribution in Section \ref{sec-J271} is to prove that their extended codes are near MDS and to settle the weight
distribution of the near MDS codes.
\item Seven infinite families of near MDS codes over $\gf(2^m)$ with parameters $[2^m+1, 3, 2^m-2]$ via
         Theorems \ref{thm-nmdscodej231} and \ref{thm-knownopolys}, which are almost optimal with respect to
         both the Singleton and Griesmer bounds.
\item  Seven infinite families of near MDS codes over $\gf(2^m)$ with parameters $[2^m+2, 3, 2^m-1]$ via
       Theorems \ref{thm-nmdscodej232} and \ref{thm-knownopolys}, which are almost optimal with respect to
         both the Singleton and Griesmer bounds.
\end{itemize}

We remark that our constructions of near MDS codes with oval polynomials in Sections  \ref{sec-extMDSc}
and \ref{sec-J272} are similar to the classical construction of NMDS codes with oval polynomials in Section \ref{sec-J271}.
The constructions of near MDS codes presented in Sections  \ref{sec-extMDSc} and \ref{sec-J272} work
for oval polynomials over $\gf(2^m)$ with coefficients in $\gf(2)$ and odd $m$ only, while the classical construction of near MDS codes  in Section \ref{sec-J271} works
for all oval polynomials over $\gf(2^m)$ for both odd and even $m$. This shows a big difference between the
constructions in Sections  \ref{sec-extMDSc} and \ref{sec-J272} and the classical construction in Section \ref{sec-J271}. Of course, the three constructions produce NMDS codes with different parameters.

It would be a nice problem to investigate applications of the near MDS codes of this paper in cryptography following the ideas in \cite{LW17} and \cite{Zhou09}.


\end{document}